%% file: main.tex
\documentclass[11pt,dvipsnames]{article}

\usepackage{natbib}

\usepackage{amsthm}
\usepackage{amsmath}
\usepackage{refcount}
\usepackage{fullpage}

\usepackage{paralist}

\usepackage{appendix}
\usepackage{graphicx}

\usepackage{bm}

\usepackage{bbold}

\usepackage{algpseudocode}
\usepackage{verbatim}

\renewcommand{\phi}{\varphi}

\usepackage{todonotes}

\newcommand{\hide}[1]{ }

\renewcommand{\mathbf}{\bm}

\theoremstyle{plain}

\renewcommand{\include}{\input}

\usepackage{amsfonts}

\usepackage{tikz}
\usetikzlibrary{decorations.pathreplacing,angles,quotes}
\usetikzlibrary{arrows}

\usepackage{float}
\usetikzlibrary{patterns}

\usepackage{libertine}
\usepackage{geometry}
\usepackage{enumitem}
\PassOptionsToPackage{vlined, boxruled}{algorithm2e}
\usepackage{algorithm2e}

\usepackage{algpseudocode}

\usepackage{mleftright}
\usepackage{relsize}

\usepackage{caption}
\usepackage{subcaption}

\usepackage{bbm}

\usepackage{amsmath}
\usepackage{amsthm}
\usepackage{graphicx}
\usepackage{tabularx}
\newcolumntype{Y}{>{\centering\arraybackslash}X}

\usepackage{array}
\usepackage{soul}

\usepackage{mdframed}
\usepackage{xparse}
\usepackage{hyperref}
\hypersetup{colorlinks,linkcolor=blue,citecolor=blue,urlcolor=blue}

\usepackage[capitalise, noabbrev]{cleveref}

\usepackage{tcolorbox}
\usepackage{tikz,lipsum}
\usepackage{newtxmath}
\tcbuselibrary{skins,breakable}

\input{Sections/temp_commands_new}

\title{Beyond Monotone Delays for Multi-Level Aggregation}
\author{
Yossi Azar\thanks{Department of Computer Science, Tel Aviv University, Israel.
Email: azar@tau.ac.il}
\and
Liad Iluz\thanks{Department of Computer Science, Tel Aviv University, Israel.
Email: liadiluz@mail.tau.ac.il, liadiluz8@gmail.com}
}
\date{}

\begin{document}
\maketitle

\begin{abstract}
    In the online Multi-Level Aggregation Problem (MLAP), requests arrive over time and are associated with nodes of a given weighted rooted tree of depth~$D$. Each request must eventually be served by performing a service. Serving a request consists of selecting a rooted subtree that contains the request’s node, incurring a service cost equal to the total weight of the selected subtree. To reduce service costs, multiple requests may be served simultaneously by selecting a single rooted subtree that spans all of them. In addition, each request is associated with a penalty function that specifies the cost incurred when the request is served at a particular time. The objective is to minimize the total cost, consisting of both service costs and penalty costs.

    Most previous work on MLAP assumes monotone non-decreasing penalty functions, commonly referred to as delay functions. Only very recent results consider penalty functions that initially decrease and subsequently increase, and even then only for the special cases of depths $D=1$ and $D=2$, namely the Joint Replenishment Problem (JRP). 
    
    In this work, we extend previous results in two ways. First, we allow \emph{arbitrary penalty functions}, which may decrease and increase multiple times. Second, we study the general MLAP with \emph{arbitrary tree depth}~$D$ under these arbitrary penalty functions. 
    We present a randomized algorithm which is 
    $O\!\left(D \log {n} \log\!\left({n}DW\right)\right)$-competitive, where $W$ is the maximum service window among all penalty functions after normalizing the Lipschitz parameter of each penalty function to be 1 and normalizing the minimum positive edge weight incident to the root to be 1; and $n$ is the number of requests.
    We note that our algorithm runs in polynomial-time, and even for $D=1$ the problem admits hardness of approximation of $\Omega(\log n)$ for polynomial time algorithms.
    
    As mentioned above, prior to our work even for trees of depth $D=1,2$, non-monotone penalty functions have been studied only in special cases of functions that decrease and increase only once.
    In contrast, for such trees we obtain $O(\log n \log\!\left(nW\right))$-competitive algorithms for arbitrary non-monotone penalty functions. 
\end{abstract}

\setcounter{page}{1}
\input{Sections/1.introduction}

\input{Sections/2.preliminaries}
\input{Sections/3.discretization}
\input{Sections/4.reduction_imp}
\input{Sections/5.algorithm_imp}
\input{Sections/6.lower_bound}
\input{Sections/7.concluding_remarks}

\bibliographystyle{plain}
\bibliography{Sections/bibliography}

\input{Sections/8.apendix}
\end{document}

%% file: Sections/temp_commands_new.tex
\theoremstyle{plain}
\newtheorem{thm1}{Theorem}[section]
\theoremstyle{remark}
\newtheorem{pthm1}[thm1]{Theorem}
\theoremstyle{plain}
\newtheorem{lem1}[thm1]{Lemma}
\theoremstyle{plain}
\newtheorem{obs1}[thm1]{Observation}
\theoremstyle{plain}
\newtheorem{inv1}[thm1]{Invariant}
\theoremstyle{plain}
\newtheorem{cor1}[thm1]{Corollary}
\theoremstyle{definition}
\newtheorem{defn1}[thm1]{Definition}
\theoremstyle{plain}
\newtheorem{fact1}[thm1]{Fact}
\theoremstyle{remark}
\newtheorem{rem1}[thm1]{Remark}
\theoremstyle{plain}
\newtheorem{prop1}[thm1]{Proposition}
\theoremstyle{plain}
\newtheorem{asmp1}[thm1]{Assumption}

\ifx\proof\undefined
\newenvironment{proof}[1][\protect\proofname]{\par
\normalfont\topsep6\p@\@plus6\p@\relax
\trivlist
\itemindent\parindent
\item[\hskip\labelsep\scshape #1]\ignorespaces
}{%
\endtrivlist\@endpefalse
}
\providecommand{\proofname}{Proof}
\fi

\def\special{0}
\def\specialproof{0}
\def\specialdefinition{0}
\def\specialremark{0}
 \def\highlight{0}
 \def\sidebar{1}

\def\colorequations{0}

\input{Sections/env_defs}

%% file: Sections/env_defs.tex
\newenvironment{theorem}[1][]{%
\begin{thm1}[#1]%
}{\end{thm1}%
}
\newenvironment{lemma}[1][]{%
\begin{lem1}[#1]%
}{\end{lem1}%
}
\newenvironment{claim}[1][]{%
\begin{lem1}[#1]%
}{\end{lem1}%
}
\newenvironment{observation}[1][]{%
\begin{obs1}[#1]%
}{\end{obs1}%
}
\newenvironment{inv}[1][]{%
\begin{inv1}[#1]%
}{\end{inv1}%
}
\newenvironment{fact}[1][]{%
\begin{fact1}[#1]%
}{\end{fact1}%
}
\newenvironment{remark}[1][]{%
\begin{rem1}[#1]%
}{\end{rem1}%
}
\newenvironment{pthm}[1][]{%
\begin{pthm1}[#1]%
}{\end{pthm1}%
}
\newenvironment{corollary}[1][]{%
\begin{cor1}[#1]%
}{\end{cor1}%
}
\newenvironment{definition}[1][]{%
\begin{defn1}[#1]%
}{\end{defn1}%
}
\newenvironment{asmp}[1][]{%
\begin{asmp1}[#1]%
}{\end{asmp1}%
}
\newenvironment{proposition}[1][]{%
\begin{prop1}[#1]%
}{\end{prop1}
}%

\if 1
    \if 1\highlight
        \renewenvironment{theorem}[1][]{%
        \begin{mdframed}[nobreak=false,backgroundcolor=Aquamarine!60]\begin{thm1}[#1]%
        }{\end{thm1}\end{mdframed}%
        }
        \renewenvironment{lemma}[1][]{%
        \begin{mdframed}[nobreak=false,backgroundcolor=YellowGreen!60]\begin{lem1}[#1]%
        }{\end{lem1}\end{mdframed}%
        }
        
        \renewenvironment{observation}[1][]{%
        \begin{mdframed}[nobreak=false,backgroundcolor=Salmon!60]\begin{obs1}[#1]%
        }{\end{obs1}\end{mdframed}%
        }

    \fi
    \if 1\sidebar
        \tcolorboxenvironment{theorem}{blanker,breakable,left=4mm, before skip=10pt,after skip=10pt, borderline west={2mm}{0pt}{Aquamarine}}
        \tcolorboxenvironment{lemma}{blanker,breakable,left=4mm, before skip=10pt,after skip=10pt, borderline west={2mm}{0pt}{YellowGreen}}
        \tcolorboxenvironment{claim}{blanker,breakable,left=4mm, before skip=10pt,after skip=10pt, borderline west={2mm}{0pt}{YellowGreen}}
        \tcolorboxenvironment{observation}{blanker,breakable,left=4mm, before skip=10pt,after skip=10pt, borderline west={2mm}{0pt}{Salmon}}
        \tcolorboxenvironment{inv}{blanker,breakable,left=4mm, before skip=10pt,after skip=10pt, borderline west={2mm}{0pt}{Salmon}}
        \tcolorboxenvironment{fact}{blanker,breakable,left=4mm, before skip=10pt,after skip=10pt, borderline west={2mm}{0pt}{Salmon}}
        \tcolorboxenvironment{pthm}{blanker,breakable,left=4mm, before skip=10pt,after skip=10pt, borderline west={2mm}{0pt}{Salmon}}
        \tcolorboxenvironment{proposition}{blanker,breakable,left=4mm, before skip=10pt,after skip=10pt, borderline west={2mm}{0pt}{Goldenrod}}
        \tcolorboxenvironment{corollary}{blanker,breakable,left=4mm, before skip=10pt,after skip=10pt, borderline west={2mm}{0pt}{Mulberry}}
        \tcolorboxenvironment{asmp}{blanker,breakable,left=4mm, before skip=10pt,after skip=10pt, borderline west={2mm}{0pt}{Goldenrod}}
    \fi
\fi

\if 1\specialproof
    \if 1\highlight
        \expandafter\let\expandafter\oldproof\csname\string\proof\endcsname
        \let\oldendproof\endproof
        \renewenvironment{proof}[1][\proofname]{%
        \begin{mdframed}[nobreak=false,backgroundcolor=lightgray!60]\oldproof[#1]%
        }{\oldendproof\end{mdframed}}
    \else
        \if 1\sidebar
        \tcolorboxenvironment{proof}{
        blanker,breakable,left=4mm,
        before skip=10pt,after skip=10pt,
        borderline west={2mm}{0pt}{lightgray}}
        \fi
    \fi
\fi

\if 1\specialdefinition
    \if 1\highlight
        \renewenvironment{definition}[1][]{%
        \begin{mdframed}[innerbottommargin=0.1cm,innertopmargin=0.1cm,backgroundcolor=Apricot!60]\begin{defn1}[#1]%
        }{\end{defn1}\end{mdframed}%
        }
    \else
        \if 1\sidebar
        \tcolorboxenvironment{definition}{
        blanker,breakable,left=4mm,
        before skip=10pt,after skip=10pt,
        borderline west={2mm}{0pt}{Apricot}}
        \fi
    \fi
\fi

\if 1\specialremark
    \if 1\highlight
        
    \else
        \if 1\sidebar
        \tcolorboxenvironment{remark}{
        blanker,breakable,left=4mm,
        before skip=10pt,after skip=10pt,
        borderline west={2mm}{0pt}{Salmon}}
        \fi
    \fi
\fi

\if 1\colorequations
    \everymath{\color{MidnightBlue}}
\fi

%% file: Sections/1.introduction.tex
\section{Introduction}

The online \emph{Multi-Level Aggregation Problem} (MLAP) is a general aggregation problem that captures the trade-off between waiting in order to batch requests, thereby sharing an expensive resource, and serving requests at an optimal time to reduce time-dependent penalties. Every request arrives at a node of a given weighted rooted tree $\mathcal{T}$ and must eventually be served. A service corresponds to selecting a rooted subtree $\mathcal{T}' \subseteq \mathcal{T}$, where all requests located in $\mathcal{T}'$ can be served simultaneously. \emph{The service cost} or \emph{the cost of performing a service} is defined as the total weight of the edges in $\mathcal{T}'$. In addition, each request incurs a time-dependent penalty cost, which depends on the time at which the request is served. The objective is to minimize the total cost, defined as the sum of the service costs and the penalties incurred by all requests.

The depth of the tree $D$, is a key structural parameter of the problem. When $D=1$, MLAP is equivalent to the \emph{TCP Acknowledgment Problem} (TCP-AP), where the tree consists only of a root and a single child, and the edge weight represents the cost of performing an acknowledgment (also called single-item JRP). When $D=2$, MLAP is equivalent to the \emph{Joint Replenishment Problem} (JRP), which models multiple items ordered with a fixed order cost (the cost of the root) and with item-specific costs (the cost of the second level).

Most prior work on the various variants of the MLAP has focused on instances with monotone non-decreasing penalty functions, commonly referred to as \emph{delay functions} (see, e.g.,~\cite{Bi:15, AT:19, Bu:17} and more in Subsection~\ref{subsec:related_works}). More recent work has studied the Joint Replenishment Problem (JRP) under penalty functions that combine holding and backlog costs. In this setting, the penalty function first decreases (modeling holding cost) and then increases (modeling backlog cost), and is assumed to be identical across all requests~\cite{Mo:24}. Subsequent work extended these results to allow penalty functions that first decrease and then increase to differ across requests, while still achieving constant-competitive algorithms~\cite{AS:25, S:25}.

In contrast, we allow the penalty functions to exhibit arbitrary behavior, possibly decrease and increase multiple times. This has not been considered even for the TCP-AP (single-item JRP). Moreover, we consider the depth of the tree~$D$ as an arbitrary parameter and not restricted only to a constant $D$, rather than restricting attention to the constant-depth case~($D=2$) as in the JRP. For such an arbitrary $D$, only monotone non-decreasing penalty functions (delay functions) have been considered. Since the common term \emph{delay} implies monotonicity, we instead adopt the more general term \emph{penalty function}, and use \emph{penalty} to denote the cost incurred when serving a request at a given time.

To this end, we introduce a new problem, the \emph{online Incremental Multicast Problem} (IMP). This problem is a natural generalization of the classical online Multicast Problem (MP). Unlike the multicast problem, where the input trees are fixed in advance, in the incremental multicast problem the trees are initially empty and may grow over time as requests arrive.

In the standard online multicast problem on trees, the input consists of a collection of weighted rooted trees. Requests (clients) arrive online, and each request is associated with at most one node in each tree. Upon its arrival, the algorithm must immediately cover the request by adding edges to the solution such that at least one path from the request’s associated node to the root of its tree is included. The algorithm may add new edges to the solution over time, but cannot remove edges that were selected earlier. The objective is to minimize the total cost of all edges included in the solution, while ensuring that all requests are covered.

We then establish two key reductions that transform an instance of MLAP with arbitrary penalty functions into a corresponding instance of IMP. Finally, to obtain a complete algorithmic solution for our problem, we show that the randomized algorithm for the online multicast problem introduced by Alon et al.~\cite{Al:04} can be naturally adapted to solve the incremental multicast problem (IMP). We rewrite both the algorithm and its analysis to fit the context of this new variant. Combined with our reductions, this yields a full randomized algorithm for MLAP with arbitrary penalty functions.

Despite allowing penalty functions that may decrease and increase multiple times, our framework remains highly relevant to real-world applications, including classical delay-based models. For example, in supply chain and inventory management, where the JRP is commonly applied, the urgency of serving a request may vary over time. A retailer preparing for a seasonal sales period may face a penalty that increases as peak demand approaches, but then decreases afterward due to discounts, excess inventory or supplier flexibility.

\begin{figure}[t]
    \centering
    \includegraphics[width=0.75\linewidth]{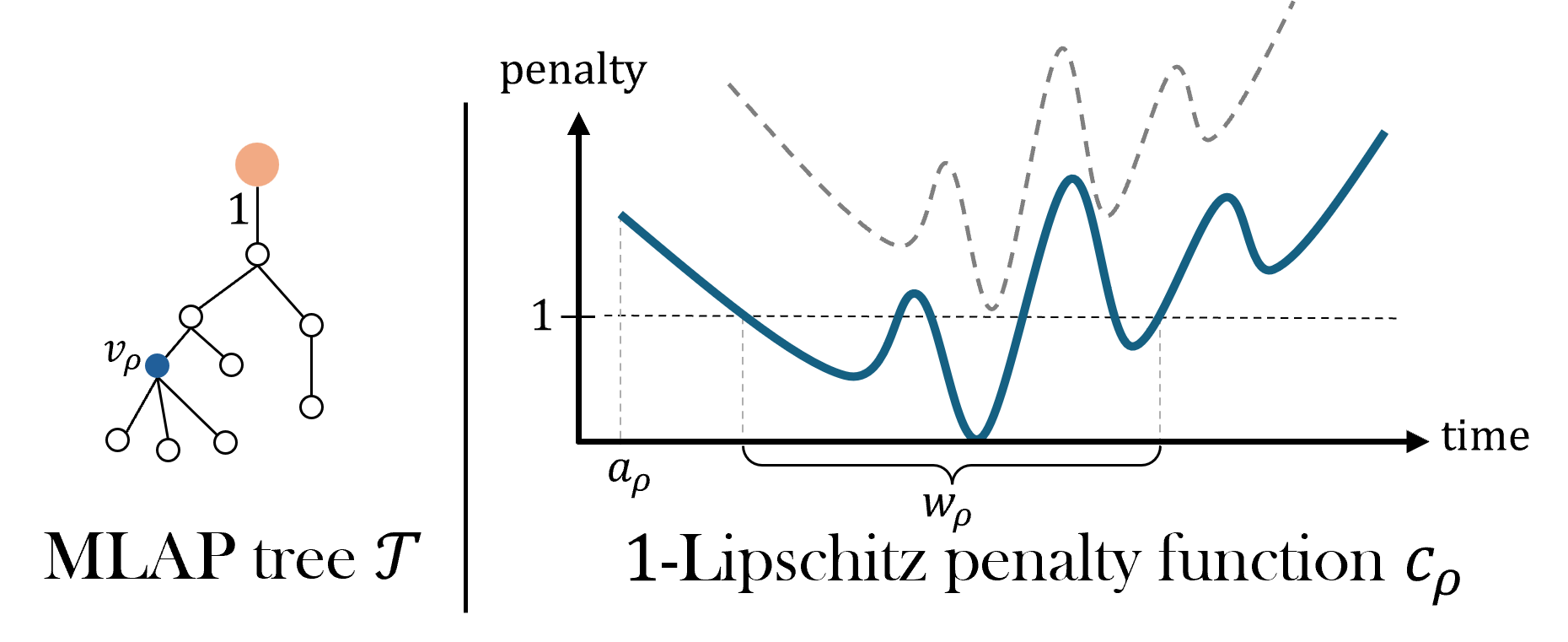}
    \caption{Request $\rho$ arrives at time $a_\rho$, is associated with a node $v_\rho$, and has a $1$-Lipschitz penalty function $c_\rho$, obtained by: 
        (i) dividing by the minimum positive edge incident to the root (making it to be $1$), 
        (ii) shifting $c_\rho$ so its minimum is $0$, and 
        (iii) rescaling to make it $1$-Lipschitz. 
        The dashed gray line represents the penalty function before normalization.
The service window $w_\rho$ is the interval between the earliest time with $c_\rho(t) \le 1$ and the latest time with $c_\rho(t) \ge 1$.}
    \label{fig:problem_def}
\end{figure}

\subsection{Our Results}
Let $n$ be the number of requests that arrive during the algorithm and let $D$ be the depth of the problem tree $\mathcal{T}$. Assume WLOG that all edge weights are positive (zero weight edges can be contracted), and that the root has only one connected node (otherwise, split the tree into a collection of disjoint trees, and solve the problem on each of them, as every two requests that are related to different trees can not share a service). Then, we normalize the weight of this only edge that incident to the root to be $1$, by normalizing the values of the penalty function using this value. 

We further assume, without loss of generality, that each penalty function attains the value $0$. Otherwise, we shift the penalty function by subtracting its minimum value. This transformation does not affect the competitive-ratio analysis, since the same constant is subtracted from both the algorithm's cost and the optimal cost. Consequently, the new competitive ratio is at least as large as the original one. We then normalize the penalty function to be $1$-Lipschitz\footnote{A function $f$ is called $L$-Lipschitz if for all $x,y$: $|f(x)-f(y)| \le L|x-y|$.} by appropriately rescaling the time axis.
Let $w_\rho$ denote the \emph{service window} of request $\rho$, defined as the interval from the earliest time at which the penalty (after normalization) is at most $1$, to the latest time at which it is at least $1$. Let $W$ be the maximum service window over all requests.

The parameter $W$ need not be known in advance, and it captures the complexity of the problem: larger values of $W$ correspond to penalty functions that originally vary more quickly or allow longer service horizons. Figure \ref{fig:problem_def} illustrates these definitions. 

Finally, we briefly define a \emph{2-decreasing rooted tree} as a weighted rooted tree in which the weight of every edge is at most half the weight of its parent edge.

The following results are given as a worst case upper bound analysis of the competitive ratio between the value of our algorithm to the value of the optimal solution, which know the whole input sequence in advance.

\begin{itemize}
    \item For the online MLAP with arbitrary penalty functions on $2\text{-decreasing}$ trees, we design a  randomized algorithm which is ${O}\left(\log n\log\!\left(nDW\right)
    \right)$-competitive.
    \item There exists a randomized algorithm for the online MLAP with arbitrary penalty functions for general trees that is
    ${O}\left(D \log n\log\!\left( nDW\right)\right)$-competitive.
    \item For constant $D$ (in particular, for the JRP with arbitrary penalty functions), we obtain a randomized algorithm which is  ${O}\left(\log n\log\!\left( nW\right)\right)$-competitive.
\end{itemize}

{\noindent\bf Remark. } Although the parameter $n$ may be large, we also show that it can be replaced by $\bar{n}$ in all the above results, which is typically much smaller than $n$. Here, $\bar{n}$ denotes the maximum number of requests that arrive within the same window of length $W$.

We note that \cite{S:25} showed that an $\alpha$-approximation algorithm for the offline single-item JRP ($D=1$) with arbitrary penalty functions implies the same approximation for the offline set-cover problem. This implies a hardness of $\Omega(\log n)$ for any polynomial time algorithm, even for a single-item JRP, unless $P=NP$. However, if we allow exponential time (or unbounded time) algorithms, the best lower bound for online deterministic algorithms is $2$ for $D=1$ and $2.754$ for $D
\ge2$. We improve this result for non-monotone penalty functions. Specifically, we obtained a lower bound of $4$ for $D=1$ with penalty functions with values in $[0,1]$. Note that this lower bound is higher than the upper bound of the JRP with penalty functions that decrease and subsequently increase considered in \cite{S:25}, which separates the two problems.

\subsection{Our Techniques}

Our solution is built upon three reductions, which can be illustrated by the following flow diagram:
\[
\text{MLAP}(n,D) \Longrightarrow \text{MLAP}_2(n,D) 
\Longrightarrow \text{MLAP}_\text{2-Discrete}(n,D,T) 
\Longrightarrow \text{IMP}(n,m)
\]

The parameter $n$ is the number of requests in all problems, and in the MLAP variants, $D$ refers to the depth of the tree. The first reduction transforms a general $\text{MLAP}(n,D)$ instance with an arbitrary tree into $\text{MLAP}_2(n,D)$ -- the same problem with $2$-decreasing trees. This step is standard in many MLAP solutions and incurs a multiplicative loss of a factor $2D$ in the competitive ratio. The second reduction takes an instance of $\text{MLAP}_2$ with arbitrary penalty functions defined over a continuous timeline and discretizes them into a set of $T$ candidate service points in time, for appropriate parameter $T$. The resulting problem is called $\text{MLAP}_\text{2-Discrete}$. The third and final reduction converts $\text{MLAP}_\text{2-Discrete}$ into an instance of the Incremental Multicast Problem (IMP), a generalization of the multicast problem that we introduce in this paper. Here, the parameter $m$ denotes the number of edges in the final IMP graph.
We show the following results towards a full randomized algorithm for the MLAP with arbitrary penalty function:
\begin{enumerate}
    \item \textbf{Reduction I: } $\text{MLAP}(n,D) \;\le^{2D}\; \text{MLAP}_{2}(n,D)$.
    \item \textbf{Reduction II: } $\text{MLAP}_{2}(n,D) \;\le\; \text{MLAP}_\text{2-Discrete}(n,D,T)$ with $ |T| = O\!\left(nW\right)$.
    \item \textbf{Reduction III: } $\text{MLAP}_\text{2-Discrete}(n,D,T) \;\le\; \text{IMP}(n,m)$ with $ m = O\!\left(nD|T|\right)$.
    \item \textbf{Algorithm I: } Designing a ${O}\left(\log n \log m\right)$-competitive randomized algorithm for the IMP, based on the randomized algorithm of Alon et. al.~\cite{Al:04} for the online multicast problem.   
\end{enumerate}

Putting all the results together, we get a randomized algorithm for the online MLAP with arbitrary penalty functions and arbitrary trees which is ${O}\left( D\log n\log(nDW) \right)$-competitive. 

\subsection{Related Works}
\label{subsec:related_works}

\paragraph*{The Joint Replenishment Problem (JRP)} models a retailer managing inventory for multiple items, where requests (demands) arrive in an online fashion over a continuous timeline. Orders incur 3 types of costs: a fixed joint cost, item-specific costs, and a time-dependent cost for serving a particular request at a specific time. The objective is to minimize the total cost. The JRP is a special case of the Multi-Level Aggregation Problem (MLAP) where $D=2$. A simpler variant is the single-item JRP, where there is only one item to be ordered. This problem is equivalent to the \emph{Dynamic TCP Acknowledgment Problem (TCP-AP)} ($D=1$).  

Classical work on JRP distinguishes two variants based on time-dependent costs: \emph{make-to-stock}, where items can be held with per-unit holding cost, and \emph{make-to-order}, where holding is disallowed and backlog costs are incurred. While equivalent in the offline setting, they differ online.
For the online make-to-order JRP, Brito et al.~\cite{BR:12} presented a 5-competitive algorithm. Buchbinder et al.~\cite{Bu:13} improved this result giving a 3-competitive algorithm using the primal-dual approach and proved a lower bound of $2.64$, which was later improved to $2.754$ for any deterministic algorithm by Bienkowski et al.~\cite{Bi:14}. For the single-item JRP (TCP-AP), Dooly et al.~\cite{Doo:98, Doo:01} gave a tight 2-competitive deterministic algorithm, Seiden~\cite{Se:00} proved an $e/(e-1)$ lower bound for randomized algorithms, and Karlin et al.~\cite{Ka:01} matched it with a randomized algorithm that is $e/(e-1)$-competitive. For the online make-to-stock JRP, Bienkowski et al.~\cite{Bi:14} designed a 2-competitive algorithm in the special case where holding costs are zero, and showed that this bound is optimal.  

More recently, research has shifted toward more general models. Moseley et al.~\cite{Mo:24} studied the case involving both holding and backlog costs. In their formulation, ordering too early incurs holding costs, while ordering too late incurs backlog costs. Thus, each request has a time-dependent cost function that first decreases (reflecting holding costs) and then increases (reflecting backlog costs), resulting in a unique time at which the cost of serving the request is minimized. They gave a deterministic $30$-competitive algorithm for the special case where all requests share the same time-dependent function. Their results were later extended in \cite{AS:25, S:25} to handle any such functions that may be different among requests, and also obtained constant-competitive algorithms.

Our work generalizes these results in two ways: (1) we consider arbitrary penalty functions that may exhibit multiple increases and decreases over time; and (2) we analyze the full MLAP under this model, rather than restricting to the JRP.

\paragraph*{The Multi-Level Aggregation Problem (MLAP)} was introduced by Bienkowski et al.~\cite{Bi:15} as a general model that captures the trade-off between delaying requests to benefit from aggregation, and serving them immediately to avoid high waiting costs. All potential requests are located at the nodes of a given rooted tree of depth $D$, and serving a set of requests corresponds to selecting a rooted subtree that spans those requests. Bienkowski et al.~\cite{Bi:15} presented an $O(D^4 2^D)$-competitive deterministic algorithm, which was later improved to $O(D^2)$-competitive deterministic algorithm by Azar and Touitou~\cite{AT:19}. Azar and Touitou in \cite{AT:20} also show $O(\log k)$-competitive deterministic algorithm for this problem where $k$ is the number of nodes in the tree. For the special case of MLAP with deadlines (zero cost follows by infinity cost after the deadline), Buchbinder et al.~\cite{Bu:17} gave an $O(D)$-competitive deterministic algorithm. For the special case of a path, Bienkowski et al. \cite{Bi:13} gave an algorithm which is 5-competitive and showed a lower bound of 3.618, which was improved later to 4 by \cite{More:Bi:21}. For both variants, the deadline and the delay, the best known lower bounds are only constant (as achieved under the JRP model). Additional variants of MLAP have been studied in \cite{More:Bi:21, More:Bo:20, More:Che:16, More:Mar:24, More:Mc:21, More:Nag:16}. Nevertheless, all of these results assume monotone non-decreasing penalty functions (i.e., delay functions), whereas our work considers arbitrary functions that may increase and decrease multiple times over time.
\vspace{0.2cm}

\paragraph*{The Multicast Problem (MP) in Trees.} Given a collection of weighted rooted trees, and a set of clients (requests) that are associated with at most one node in each tree. The objective is to select a minimum-weight collection of rooted subtrees such that all clients are served; that is, for every client, the solution contains at least one rooted subtree that spans at least one of the nodes that associated with this client.
In the online version, clients arrive one by one, and upon the arrival of a client, the algorithm must immediately serve it. The service cost is defined as the sum of the weights of all edges in the chosen subtree. The algorithm may add new edges to its current solution but cannot remove any previously chosen edges. The goal is to minimize the total serving cost over time.

Alon et al.~\cite{Al:04} described a randomized algorithm for the online MP in trees that achieves a competitive ratio of $O\!\left(\log n \cdot \log m\right)$, where $n$ is the number of clients that arrive during the execution of the algorithm and $m$ is the total number of edges across all trees. In a subsequent paper, Alon et al.~\cite{Al:09} established a nearly matching lower bound of
$
\Omega\!\left(\frac{\log n \log m}{\log\log n + \log\log m}\right),
$
for any deterministic algorithm, via a reduction from the online set cover problem, which is a special case of the MP.

\paragraph{The  Non-Metric Facility Location (NMFL) problem} is a special case of the multicast problem (MP) where all trees have depth $2$. The root (facility) connects to an intermediate node by an edge whose weight equals to the facility setup cost, and the intermediate node connects to leaves (clients), with edge weights corresponding to the clients’ connection cost to that facility. Alon et al.~\cite{Al:04} presented a randomized $O(\log |F| \log |A|)$-competitive algorithm for the NMFL, where $F$ is the set of facilities and $A$ is the set of all arriving clients. By applying the deterministic online set cover algorithm of Alon et al.~\cite{Al:09}, one obtains a deterministic algorithm which is $O\!\left((\log |C| + \log |F|)(\log |C| + \log\log |F|)\right)$-competitive, where $C$ is the set of all potential clients. Bienkowski et al.~\cite{Bi:21} improved this, giving an $O\!\left(\log|F| \cdot (\log|C| + \log|F|)\right)$-competitive deterministic algorithm, which is asymptotically better when $|F| \ll |C|$.

The conceptual approach used in our paper to generalize from the JRP to the MLAP is inspired by the relationship between the NMFL and the MP, where NMFL corresponds to the special case of trees of depth~2.

\subsection{Paper Organization}
Section \ref{sec:preliminaries} introduces notation and definitions for MLAP with arbitrary penalty functions and the Incremental Multicast Problem (IMP) in trees, along with a standard reduction used in MLAP variants.
Section \ref{sec:discretization} presents a reduction from continuous-time MLAP with arbitrary penalties to a discrete version; this reduction may be of independent interest.
Section \ref{sec:reduction_imp} gives the final reduction from discrete MLAP to IMP. Section \ref{sec:algorithm} adapts the algorithm for MP of Alon et al.~\cite{Al:04} to our IMP variant.
%
%
Section \ref{sec:lower_bound} establishes a lower bound for this problem even for the case of $D=1$.
Finally, Section \ref{sec:conculding-remarks} concludes with directions for future research.


%% file: Sections/2.preliminaries.tex
\section{Preliminaries}
\label{sec:preliminaries}

\subsection{The MLAP with arbitrary penalty functions}
The input of the MLAP consists of a weighted tree $\mathcal{T} = (V,E)$ with positive edge weights (or costs) $c : E \rightarrow \mathbb{R}^+$, rooted at a node $r \in V$ and having depth~$D$. Each request is associated with a non-root node of the tree, representing the request’s type. Serving a request means selecting a subtree rooted at~$r$ that contains the node of the request. Multiple requests may be aggregated and served simultaneously by selecting a single rooted subtree that spans all of them. The \emph{service cost} or \emph{the cost of performing a service}, is the sum of all weights in the selected subtree and is independent of the time at which the service is performed.

In addition, each request is associated with a penalty function that is revealed upon its arrival (clairvoyant model) and specifies the cost of serving the request at every point in continuous time. This function is arbitrary and may increase or decrease multiple times.
The algorithm may perform additional services over time, but it can not cancel or modify a service once executed. The objective is to minimize the total cost: service costs plus penalty costs, while ensuring that every request is eventually served.


\paragraph{Notation and Formal Definition.}
We normalize the cost of the unique edge incident to the root to be 1 (by scaling all weights). For every node $v \in V$ that is not the root $r$, we denote by $c(v)$ the cost of the edge between $v$ and its parent. For any subtree $\mathcal{T}' \subseteq \mathcal{T}$, we write $c(\mathcal{T}')$ for the total weight of all edges in $\mathcal{T}'$.

Requests associated with different nodes arrive over time. Each request $\rho$ is specified by a tuple $(v_\rho, a_\rho, c_\rho)$, where $v_\rho\in V$ is the node at which the request occurs, $a_\rho\in \mathbb{R}^+$ is the arrival time, and $c_\rho : [a_\rho, \infty) \to \mathbb{R}^+$ is the penalty function of the request. It means that serving request $\rho$ at time $t$ incurs a penalty of $c_\rho(t)$.

We normalize the penalty function to be $1$-Lipschitz by stretching the time. Since the penalty function may increase and decrease over time, we define the \emph{service window} of a request $\rho$, denoted by $w_\rho$, in order to bound the time period during which the request may be served. Specifically, $w_\rho$ is the interval from the first time at which the function attains a value at most $1$ to the last time at which it attains the value $1$. We assume that this window is finite and denote by $W$ the maximum length of a service window over all requests. Note that the values of these parameters need not be known in advance. Figure \ref{fig:problem_def} illustrates the arrival of request $\rho$.

\paragraph{Feasible Solution.}
Let $R$ be a sequence of requests, each associated with exactly one node of $\mathcal{T}$.  
A solution to the MLAP consists of a sequence of services (rooted subtrees)  
$\mathcal{T}_1, \mathcal{T}_2, \ldots, \mathcal{T}_\ell \subseteq \mathcal{T}$,  
where each service $\mathcal{T}_i$ is performed at time $t_i$.  Each service $\mathcal{T}_i$ has a corresponding set of requests $\mathcal{R}_i \subseteq R$ that are served by it. The solution is \emph{feasible} if both conditions hold: (1) for every $i = 1,\ldots,\ell$ and every $\rho \in \mathcal{R}_i$: $t_i \ge a_\rho$ and $v_\rho \in \mathcal{T}_i$; 
and (2) every request $\rho \in R$ must eventually be served, i.e., there exists some $i$ such that $\rho \in \mathcal{R}_i$.

\paragraph{The Goal.} Minimizing the total cost:
$
\sum_{i=1}^{\ell}{\left(c\left(\mathcal{T}_i\right) + \sum_{\rho \in \mathcal{R}_i}{c_\rho\left(t_i\right)}\right)}
$.

\paragraph{The MLAP with discrete penalty functions.}
In this problem the penalty functions are discrete, i.e., each request $\rho$ is associated with a discrete penalty function $c_\rho : T_\rho \rightarrow \mathbb{R}^+$, where $T_\rho$ is a finite set of all potential time units to serve  request $\rho$. We denote by $T$ the set of all potential time units to serve all the requests, i.e., $T=\bigcup_{\rho\in R}{T_\rho}$.

\subsection{Reducing MLAP on an arbitrary tree to MLAP on 2-decreasing trees}
In this section we show the first reduction. Since this is a very common reduction, we briefly describe it and refer to other works that extend it.

\begin{definition}
    Fix $\alpha \ge 1$.  
    A weighted tree $\mathcal{T}$ is called an \emph{$\alpha$-decreasing tree} if for every three consecutive nodes $w, v, u$, where $w$ is the parent of $v$ and $v$ is the parent of $u$, it holds that $c(v) \ge \alpha \cdot c(u)$.
\end{definition}

Following a standard reduction described in \cite{Bi:15, Bu:17, AT:19}, we construct a new graph as follows:  
for every node $u \in V$, except the root $r$, the parent of $u$ is defined as the closest ancestor $v$ along its only path to the root, such that $c(v) \ge 2c(u)$; if no such node exists, the parent is set to be $r$.  
This construction is well defined and yields $2$-decreasing trees. It is straightforward to verify that any algorithm for MLAP on $2$-decreasing trees can be transformed into an algorithm for MLAP on arbitrary tree, losing a factor of $2D$ in the approximation ratio (offline) or in the competitive ratio (online). For full details, see for instance Section~3 in Bienkowski et al. \cite{Bi:15}.


\subsection{The Incremental Multicast Problem (IMP)}

\begin{figure}[t]
    \centering
    \includegraphics[width=0.8\linewidth]{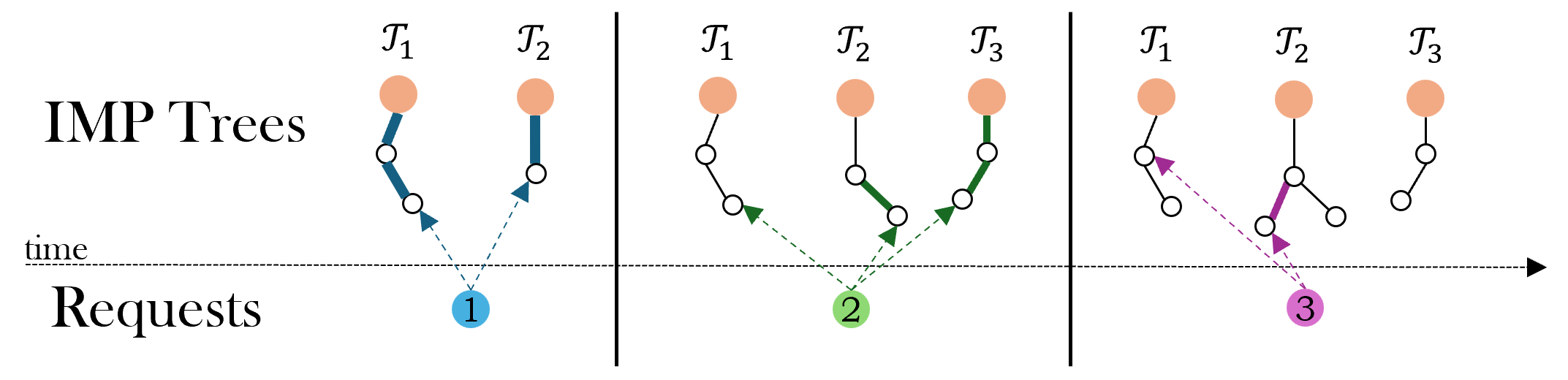}
    \caption{The figure shows three sequential requests. Each request (at the bottom) points to its associated vertices ($S_\rho$). Bold edges indicate newly added edges ($\{\mathcal{P}_v\}_{v\in S_\rho}$) and their costs ($c_\rho$).
The first request connects to two new vertices in different trees via paths of lengths 2 and 1. The second request involves three vertices: one already in the tree (length 0), one connected by a single new edge, and one connected by a path of length 2 that creates a new tree. The third request involves two vertices: one existing (length 0) and one connected by a single new edge (length 1).}
    \label{fig:IMP}
\end{figure}

We now introduce a new variant and a generalization of the online multicast problem. In contrast to the classical online multicast problem, in the incremental multicast problem the trees are not known in advance and may be incrementally constructed as requests arrive.

Let $R$ be the sequence of requests and let $\mathcal{T} = (V,E)$ be a family of rooted trees. Initially, both $R$ and $\mathcal{T}$ are empty. A new tree is introduced, together with its root, when the first request that is associated with this tree arrives, and the tree then grows incrementally as additional requests connect to it. Formally, each request~$\rho$ is represented by the tuple $(S_\rho, \{\mathcal{P}_v\}_{v\in S_\rho}, c_\rho)$, where:
\begin{itemize}
    \item $S_\rho$ is a set of nodes (already in $V$ or new nodes) that are associated with the request. Each node $v \in S_\rho$ belongs to at most one tree in $\mathcal{T}$.

   \item $\{\mathcal{P}_v\}_{v \in S_\rho}$ is a collection of paths in $\mathcal{T}$.
    For each node $v \in S_\rho$, the path $\mathcal{P}_v$ lies in some unique tree $\mathcal{T}' \in \mathcal{T}$. This path connects the node $v$ to an existing node in $\mathcal{T}'$ or initiates a new rooted tree (a path from the node to the new root). Each such a path consists of only new edges, i.e., edges that were not previously present in this tree, and adding $\mathcal{P}_v$ to $\mathcal{T}'$ preserves the property that $\mathcal{T}'$ is a tree.

    \item $c_\rho : \bigcup_{v \in S_\rho} E(\mathcal{P}_v) \to \mathbb{R}^+$ is a cost function on edges assigning a positive cost to every new edge introduced by the request.
\end{itemize}
Figure~\ref{fig:IMP} illustrates the incremental construction process in the IMP.

\paragraph{Feasible Solution.} Let $T_\rho$ denote the set of all roots that are associated with the nodes in $S_\rho$. When a request $\rho$ arrives, it induces a unit demand $D_\rho = \left(S_\rho, T_\rho\right)$. This means that at any time, the solution must contain at least one path from some node $v \in S_\rho$ to the root of its corresponding tree from $T_\rho$. The algorithm is allowed to add new edges to the solution, but never remove edges once they have been added.

\paragraph{The Goal.} The cost of the algorithm is the total cost of all edges in the final solution, and the objective is to minimize it.

\paragraph{The Fractional Model.}  
In the fractional model, the algorithm may assign a fractional weight $w_e \in [0,1]$ to any edge $e \in E$. These fractions may increase over time, but they can never decrease. A feasible solution is an assignment of weights $\{w_e\}_{e \in E}$ such that, at any moment and for every request $\rho$, the total flow from $S_\rho$ to $T_\rho$ is at least $1$. Notice that since the fractional weights never decrease, the feasibility conditions of all previously arrived requests remain satisfied. The cost of the fractional solution is the weighted sum of edge costs: $\sum_{e \in E} w_e c_e$.

\paragraph*{Simple Lower Bounds.}
Since the NMFL is a special case of the MP, the NMFL without knowing the facility-client graph in advance is a special case of the IMP. As Bienkowski et al. mentioned in \cite{Bi:21}, any \textbf{deterministic algorithm} for the online NMFL without knowing the facility-client graph in advance is at least $m$-competitive, where $m$ is the final number of edges. Therefore, this lower bound holds also for the IMP.
A similar argument, which has appeared in several previous works (e.g.,~\cite{Bu:09, Al:04}), shows that any \textbf{randomized algorithm} for this problem admits a lower bound of $\Omega(\log m)$. To illustrate this, consider the fractional model, which lower bounds any randomized algorithm.
The first request is connected to $m$ trees, each consisting of a single edge. The algorithm must assign weights to the edges so that the total weight is at least~$1$. Partition the trees into two arbitrary groups of equal size. At least one of these groups must have total weight at least~$\tfrac{1}{2}$. We discard this group and do not use it anymore.
The next request is then connected only to the remaining half of the trees. We repeat this procedure, halving the number of available trees at each step, until no trees remain. At each step, the algorithm incurs a cost of at least~$\tfrac{1}{2}$ from the discarded group. Hence, the total cost paid by the algorithm is $\Omega(\log m)$. In contrast, the optimum solution pays only~$1$, by serving the request using the last remaining tree.

%% file: Sections/3.discretization.tex
\section{Reduction to MLAP with discrete arbitrary penalty functions}
\label{sec:discretization}
\begin{figure}[t]
    \centering
    \includegraphics[width=0.6\linewidth]{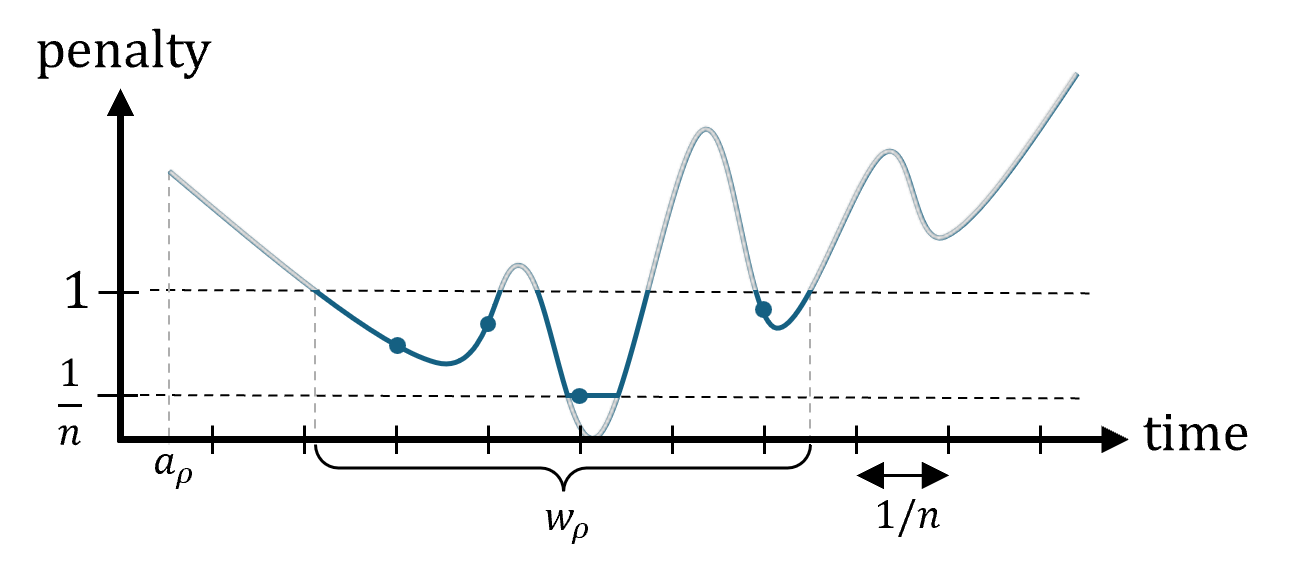}
    \caption{First, all values below $\frac{1}{n}$ get value $\frac{1}{n}$. Then, the discrete penalty function is defined at the points corresponding to the global intervals of length $\frac{1}{n}$ that lie within the service window and have values of at most $1$ (shown as blue circles).}
    \label{fig:discrete_function}
\end{figure}

In this section, we describe the second reduction from the MLAP on 2-decreasing trees with maximum depth $D$ and $n$ requests ($\text{MLAP}_2(n,D)$) to $\text{MLAP}_\text{2-Discrete}(n,D,T)$, the same problem with discrete arbitrary penalty functions with the set $T$ of all potential service time units. This technique is general and can be applied to other offline and online problems involving arbitrary penalty functions.

By Lipschitzness condition and bounded service window, we can assume that each penalty function has a global minimum point (with zero value as we mention earlier). For simplicity, we make two additional assumptions that will be removed at the end of this section: (1) all $n$ requests arrive in the time interval $[0,W]$ (see Subsection \ref{subsec:window}); and (2) the parameters $n$ and $W$ are known in advance (see Subsection \ref{subsec:doubling}). Finally, we note that throughout this section we work in the setting of MLAP on 2-decreasing trees with maximum depth $D$. The main result of this section:

\begin{theorem}[Discretization Step] Assume there is an algorithm $\mathcal{A}$ for the $\text{MLAP}_\text{2-Discrete}(n,D,T)$ which is an $\alpha$-approximation (offline) or competitive (online). Then, there is an algorithm $\mathcal{B}$ for the original $\text{MLAP}_2(n,D)$ which is an $O(\alpha)$-approximation (offline) or competitive (online), with $|T|=O\left(nW\right)$. Moreover, both algorithms do not need to know the parameters $n$ and $W$ in advance.
\end{theorem}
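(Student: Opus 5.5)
We first prove the statement under the two simplifying assumptions announced above: all requests arrive in $[0,W]$, and $n,W$ are known. We then remove both assumptions. The discretization is the one in Figure~\ref{fig:discrete_function}. Lay down a global grid $G=\{k/n : k\in\mathbb{Z}_{\ge 0}\}$ of step $1/n$. For each request $\rho$ set $\tilde c_\rho(t)=\max\{c_\rho(t),1/n\}$. Define $T_\rho$ as the grid points inside the service window $w_\rho$ at which $\tilde c_\rho\le 1$ (taking points $\ge a_\rho$). The discrete penalty is the restriction of $\tilde c_\rho$ to $T_\rho$. Since every window lies in $[0,2W]$ (arrival in $[0,W]$ plus window length at most $W$), we get $|T|\le |G\cap[0,2W]|=O(nW)$. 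Algorithm $\mathcal{B}$ simulates $\mathcal{A}$ on the discrete instance and performs, at each grid time, exactly the service $\mathcal{A}$ performs. This is feasible online, because grid times are common to all requests and each $\rho$ is revealed by time $a_\rho$.

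The cost of $\mathcal{B}$ is at most the cost of $\mathcal{A}$, since $c_\rho\le\tilde c_\rho$ pointwise and service costs are identical. It therefore remains to show $\mathrm{OPT}_{\mathrm{disc}}=O(\mathrm{OPT})$; this is the main step. Take an optimal continuous solution and move each service at time $t_i$ to the next grid point $\lceil t_i\rceil_G$. This keeps $t\ge a_\rho$. By $1$-Lipschitzness each served request's penalty grows by at most $1/n$, and the flooring to $1/n$ adds at most another $1/n$ per request. The total additive loss is at most $2n\cdot(1/n)=2$. Every solution pays at least $1$, because each service uses the root edge of weight $1$, so the loss is at most $2\,\mathrm{OPT}$.

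The obstacle is that the shifted time may leave $T_\rho$: it may fall outside $w_\rho$, or land where $\tilde c_\rho>1$. To handle this, first argue that OPT can be assumed to serve each $\rho$ at a time with $c_\rho\le 1$. Otherwise the penalty exceeds $1$, and we can reassign $\rho$ to an extra service at a global minimum point of $c_\rho$ (value $0$). That extra service costs at most $\sum$ of the edge weights on $\rho$'s root path, which is at most $2$ by the 2-decreasing property. So this changes cost by a constant factor, charging at least $1$ per such request. Moreover any time with $c_\rho\le 1$ beyond the window's last crossing of $1$ is handled symmetrically. More simply, for a request served at a time with $c_\rho(t)\le 1$ we snap to the nearest grid point in the direction that keeps $c_\rho\le 1+1/n$. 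If the snapped point violates $\le 1$, we instead serve $\rho$ by an extra service at the nearest grid point to its minimum. That point has value at most $1/n\le 1$ and lies in the window since $W\ge$ the relevant distance; its cost is again at most $2$, charged to the at-least-$1-1/n$ penalty OPT paid. This gives $\mathrm{OPT}_{\mathrm{disc}}\le O(1)\cdot\mathrm{OPT}$.

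Finally we remove the assumptions. \textbf{(1) Arrivals in $[0,W]$.} Partition time into blocks of length $W$ and handle each block's requests by an independent copy, so the grid per block has size $O(\bar n W)$. A request arriving in block $j$ has its window within blocks $j,j+1$. Running the copies in parallel only adds their costs. The OPT lower bound splits across blocks by charging each OPT service to the blocks of the requests it serves, losing a constant factor because a service spans at most two consecutive blocks' request sets. \textbf{(2) Unknown $n,W$.} Use doubling: guess $n$ and $W$, and when the number of requests or an observed window exceeds the guess, double the guess and restart on future requests with a refined grid. Each phase's cost is bounded against OPT restricted to that phase's requests, which is at most OPT. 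The geometric growth of guesses, together with the fact that the grid step $1/n$ only refines and the per-phase additive loss is at most $2$ with at least $1$ paid per phase, keeps the total within $O(\alpha)\cdot\mathrm{OPT}$ and $|T|=O(nW)$.
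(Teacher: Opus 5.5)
Your discretization is essentially the paper's. It uses the same $1/n$ grid and $1/n$ floor and snaps each OPT service forward. When the snapped point has penalty above $1$, it uses the same fallback: an individual service of cost at most $2+1$, charged to an OPT penalty exceeding $1-\frac1n$. Your additive $2/n$-per-request bound plays the role of the paper's $\hat c_\rho(t')\le 2c_\rho(t)$. Snap forward only: moving backward can violate $t\ge a_\rho$ and would split a shared service. The ``direction'' clause is unnecessary anyway, since by Lipschitzness every grid point within $1/n$ has penalty at most $c_\rho(t)+\frac1n$. Your block decomposition, charging each OPT service to at most two consecutive blocks, is a valid alternative to the paper's even/odd copies. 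It needs every request to be served inside its window, which your reassignment step ensures.

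\textbf{The gap is in removing the knowledge of $n$ and $W$.}

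With restarts, $\mathcal{B}$ pays $\sum_i \mathrm{ALG}_i\le\alpha\sum_i\mathrm{OPT}_i\le\alpha\cdot(\#\text{phases})\cdot\mathrm{OPT}$, and doubling produces $\Theta(\log n+\log W)$ phases. This loss is real, not an artifact of the analysis. Suppose all requests sit at one node and have zero penalty throughout a common interval. Then $\mathrm{OPT}=1$, but each phase's copy buys the root edge on its own. Copies that are each optimal for their own phase may do so at different times, so $\mathcal{B}$ pays the number of phases.

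Neither of your two arguments makes the sum telescope. Geometric growth of the guesses only bounds the number of phases. ``At least $1$ paid per phase'' charges the discretization loss to $\mathrm{OPT}_i$, not to $\mathrm{OPT}$.

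The paper's argument works only because its guarantee depends on the estimates. For $\beta=\log n\log(nDW)$ it \emph{squares} the estimates: $n_{i+1}=n(t)^2$ and $W_{i+1}=n(t)\,D\,W(t)^2$. This gives $\beta_{i+1}\ge 2\beta_i$, and since the final estimates are polynomial in the true values, $\sum_i\beta_i\le 2\hat\beta\le 12\beta$.

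So the ``moreover'' clause is established only for ratios of this form, which is exactly what the IMP bound supplies downstream. For a generic, e.g.\ constant, $\alpha$, your restart scheme does not give $O(\alpha)$. To close the argument you must use the specific growth of $\alpha$ in $n$ and $W$.
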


We start with the offline version and then we will show that it can be maintained in online fashion, using a two-step reduction:
\begin{enumerate}
    \item Reduce to \textbf{non-zero} penalty functions $\left[a_\rho, \infty\right) \rightarrow \left[\frac{1}{n},\infty\right)$, incurring a factor of 2 loss in approximation.
    \item Reduce to \textbf{discrete} penalty functions $T_\rho \rightarrow \left[\frac{1}{n}, 1 \right]$, where $T_\rho$ is the potentially discrete set of service time unites for $\rho$, incurring a factor of $6$ loss in approximation.
\end{enumerate}

Overall, we get a reduction from arbitrary penalty functions $\left[a_\rho, \infty\right) \rightarrow \left[0, \infty\right)$ to discrete penalty functions $T_\rho \rightarrow \left[\frac{1}{n},1\right]$, incurring a factor of $12$ loss in approximation. To show any of those reductions, we denote by $\text{ALG}_{\text{org}}$, $\text{ALG}_{\text{new}}$, $\text{OPT}_{\text{org}}$ and $\text{OPT}_{\text{new}}$ the values of the algorithm and the optimum on the original problem and on the new problem respectively. In each step, we denote by $c_\rho$ the original penalty function of a request $\rho$, and by $\hat{c}_\rho$ its new penalty function after applying the step. Figure \ref{fig:discrete_function} illustrates the obtained discrete function.




\paragraph*{Step 1: Non-zero values}
Define a new \textbf{non-zero penalty function} by setting every value less than $\frac{1}{n}$ to be $\frac{1}{n}$ and all other values remain the same: $\hat{c}_\rho = \max\{c_\rho,\; \frac{1}{n}\}$.

\begin{lemma}
Any $\alpha$-approximation (or competitive) algorithm for the problem using the new non-zero penalty functions is a $2\alpha$-approximation (or competitive) algorithm for the original penalty functions.
\end{lemma}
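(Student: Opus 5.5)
The plan is a standard sandwich argument. Since the service cost does not depend on time, any schedule (sequence of services with assigned request sets) is feasible for both instances, and only the penalty term changes. For every request $\rho$ and every time $t$ we have
\[
c_\rho(t) \le \hat{c}_\rho(t) \le c_\rho(t) + \tfrac{1}{n},
\]
because $\hat{c}_\rho=\max\{c_\rho,\frac1n\}$. Summing over the $n$ requests, for any fixed schedule $S$ we get
\[
\text{cost}_{\text{org}}(S) \le \text{cost}_{\text{new}}(S) \le \text{cost}_{\text{org}}(S) + 1 .
\]

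Let $\mathcal{A}$ be the $\alpha$-approximation (or $\alpha$-competitive) algorithm for the new instance, and run it unchanged on the original instance. In the online setting, the transformation $c_\rho\mapsto\hat c_\rho$ is computable upon $\rho$'s arrival from $c_\rho$ and $n$, and $n$ is assumed known in this section. By the left inequality applied to $\mathcal{A}$'s output, and the right inequality applied to the optimal original schedule,
\[
\text{ALG}_{\text{org}} \le \text{ALG}_{\text{new}} \le \alpha\,\text{OPT}_{\text{new}} \le \alpha\left(\text{OPT}_{\text{org}} + 1\right).
\]
(In the online case, any additive constant in the competitive guarantee is absorbed in the same way.)

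The only real step is to show $\text{OPT}_{\text{org}} \ge 1$, so that the additive $1$ is at most $\text{OPT}_{\text{org}}$. By the paper's normalization, the root has a single incident edge, of weight $1$. Every request sits at a non-root node, so every service must contain this edge and costs at least $1$. If $n\ge1$, at least one service is performed, so $\text{OPT}_{\text{org}}\ge 1$; if $n=0$ the statement is trivial. Hence $\text{ALG}_{\text{org}} \le 2\alpha\,\text{OPT}_{\text{org}}$.

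The point to watch is exactly this lower bound on $\text{OPT}_{\text{org}}$. It relies on the reduction, made earlier, to a tree whose root has a single child edge normalized to weight $1$. Without that reduction, the additive $n\cdot\frac1n$ term could not be charged to OPT.
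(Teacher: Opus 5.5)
Your proposal is correct and follows the paper's proof. Both use the pointwise bound $c_\rho \le \hat c_\rho \le c_\rho + \frac{1}{n}$ to get $\text{ALG}_{\text{org}} \le \text{ALG}_{\text{new}}$, keep OPT's original schedule to get $\text{OPT}_{\text{new}} \le \text{OPT}_{\text{org}} + 1$, and finish with $\text{OPT}_{\text{org}} \ge 1$. Your justification of that last inequality, via the single root edge normalized to weight $1$, is a more explicit version of the paper's one-line remark that serving at least one request costs at least $1$.
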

\label{lemma_first_step}

\begin{proof}
Since $\hat{c}_\rho(t) \ge c_\rho(t)$ for every $t$, it holds that $\text{ALG}_{\text{org}} \le \text{ALG}_{\text{new}}$. On the other hand, by keeping all service points as in $\text{OPT}_{\text{org}}$, the new optimum pays at most an additional $\frac{1}{n}\cdot n=1$. Finally, since $\text{OPT}_{\text{org}} \ge 1$ for serving at least one request, we get
$\text{OPT}_{\text{new}} \le \text{OPT}_{\text{org}} + 1 \le 2 \cdot \text{OPT}_{\text{org}}$.
\end{proof}

\paragraph*{Step 2: Discrete functions}
The discrete set of service time units for $\rho$, denoted by $T_\rho$, is the set obtained by dividing the service window $w_\rho$ into global intervals of length $\frac{1}{n}$, restricted to points with value at most $1$:
\[
T_\rho =
\left\{
t := \frac{1}{n} \cdot q
\;\middle|\;
 q \in \mathbb{N},\; t \in w_\rho,\; c_\rho(t) \le 1
\right\}.
\]
The new discrete penalty function is defined as $\hat{c}_\rho = c_\rho|_{T_\rho}$.

\begin{lemma} \label{lemma:disc:step2}
\begin{enumerate}
    \item For every request $\rho$ the set $T_\rho$ is not empty. 
    \item Let $T=\bigcup_\rho{T_\rho}$. Then, $|T| = O\left(nW\right)$.
    \item Any $\alpha$-approximation (competitive) algorithm for the problem using the new discrete penalty functions is a $6\alpha$-approximation (competitive) algorithm using the original non-zero penalty functions.
\end{enumerate}
\end{lemma}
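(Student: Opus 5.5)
For part (1), I would use the normalization. Each penalty function attains a global minimum $0$ at some time $t^*_\rho$. After Step 1 the value there is $\frac1n$. The global minimum lies inside $w_\rho$: the window starts at the first time the value is at most $1$, which is no later than $t^*_\rho$. It ends at the last time the value is at least $1$; if $t^*_\rho$ lay after that point, the function would stay below $1$ forever after it, and since the window is assumed finite I take this as lying inside the window. By $1$-Lipschitzness, every $t$ with $|t-t^*_\rho|\le \frac1n$ has $c_\rho(t)\le \frac1n+\frac1n\le 1$ (using $n\ge 2$; $n=1$ is trivial). The interval $[t^*_\rho-\frac1n,t^*_\rho+\frac1n]$ contains a grid point $q/n$, and I need that point to also lie in $w_\rho$. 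Since the value is at most $1$ on that whole neighborhood, the window's left endpoint is at most $t^*_\rho-\frac1n$ whenever that point is after $a_\rho$. I would handle the boundary cases, namely the arrival time and the window's right end, by taking the grid point on the appropriate side of $t^*_\rho$.

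For part (2), every $T_\rho$ lies in $w_\rho$, which has length at most $W$. Under the standing assumption that all arrivals are in $[0,W]$, every window lies in $[0,2W]$. Hence $T\subseteq\{q/n: 0\le q\le 2nW\}$, so $|T|=O(nW)$.

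For part (3), the direction $\text{ALG}_{\text{org}}\le\text{ALG}_{\text{new}}$ is immediate, since the discrete solution is a feasible continuous solution with identical costs. The main work is showing $\text{OPT}_{\text{new}}\le 6\,\text{OPT}_{\text{org}}$ by converting an optimal continuous solution into a discrete one. I would first normalize OPT so that every request is served inside its window at a time where its value is at most $1$. If OPT serves $\rho$ at a time $t$ where $c_\rho(t)>1$, or outside $w_\rho$, the penalty paid is at least $1$. I would replace this by a private service of $\rho$ at its discrete minimum point. That costs at most the root-to-$v_\rho$ path weight, which is at most $2$ in a $2$-decreasing tree with root edge $1$, plus penalty $\frac1n\le 1$, so at most $3$ times what was paid. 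For the remaining services, performed at time $t$, I would move each to a nearby grid point. The difficulty is that moving in one direction may exit the window or raise a value above $1$ for some of the served requests. My plan is to duplicate each service $\mathcal{T}_i$ into the two adjacent grid points $\lfloor tn\rfloor/n$ and $\lceil tn\rceil/n$, doubling the service cost. Each request is then assigned to whichever of the two copies lies in its $T_\rho$. Its penalty changes by at most $\frac1n$ by Lipschitzness, and since $c_\rho\ge \frac1n$ this is at most doubling, so penalty $\le 2c_\rho(t)$.

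The main obstacle is guaranteeing that at least one neighbor grid point lies in $T_\rho$, i.e. has value at most $1$ and lies inside the window. When $c_\rho(t)\le 1-\frac1n$ Lipschitzness settles it. Near the threshold, or near the window endpoints, I would again fall back to a private service charged against a penalty of order $1$, which is affordable. Summing the two kinds of charges gives a total factor of at most $6$.
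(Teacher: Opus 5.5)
Your proposal is correct and follows essentially the same argument as the paper. Part~1 picks a grid point near the zero of $c_\rho$; part~2 counts grid points over a range of length $O(W)$. Part~3 splits at threshold $1-\frac{1}{n}$: high-penalty requests get private services (cost at most $3$, hence factor $\frac{3}{1-1/n}\le 6$), and the rest are shifted to an adjacent grid point, which at most doubles their penalty.

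The only difference is that you duplicate each service at both neighbouring grid points. The paper always rounds up to $\lceil tn\rceil/n$, which is never before $a_\rho$ and, when $c_\rho(t)\le 1-\frac{1}{n}$, has value at most $1$ and so lies in $T_\rho$. This keeps the service cost unchanged and removes your endpoint concerns for low-penalty requests. Your doubled service cost is harmless anyway, since the factor $6$ from the near-threshold case dominates.
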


\begin{proof}
\begin{enumerate}
\item Notice that the penalty functions obtained from Step~1 are still $1$-Lipschitz. Therefore, for every interval $I$ of length $\frac{1}{n}$, we have
\[
\max_{t \in I} c_\rho(t) - \min_{t \in I} c_\rho(t) \le \frac{1}{n}.
\]
Let $t_0$ be a time at which the original penalty function attains value $0$. The endpoint of the interval containing $t_0$ has cost at most $\frac{1}{n} \le 1$. By definition, this point belongs to $w_\rho$, and hence is included in $T_\rho$.

\item For every request, $|T_\rho| \le \frac{w_\rho}{1 / (n)} + 1$. Since all requests arrived in $[0,W]$, $|T| \le \frac{W}{1 / (n)} + 1 =  O\left(nW\right)$.

\item Since we only restrict the set of potential service points, $\text{ALG}_{\text{org}} \le \text{ALG}_{\text{new}}$.
Let $t$ be the service time of request $\rho$ in $\text{OPT}_{\text{org}}$. Define a new corresponding service time $t' \in T_\rho$ for $\text{OPT}_{\text{new}}$ as follows:
\begin{itemize}
    \item If $c_\rho(t) > \left(1-\frac{1}{n}\right)$, then $t'$ is an arbitrary point in $T_\rho$ (e.g., the closest one).
    \item Otherwise, $t'$ is the closest point in $T_\rho$ such that $t' \ge t$ (we will show that there exists one in this case).
\end{itemize}
Next we analyze the cost of the new optimal solution. Let $R_{> \left(1-\frac{1}{n}\right)}$ be the set of requests served by $\text{OPT}_{\text{org}}$ with penalty greater than $\left(1-\frac{1}{n}\right)$, and define $R_{\le \left(1-\frac{1}{n}\right)}$ analogously. Denote their total penalties by
$ cost\bigl(R_{> (1-\frac{1}{n})}\bigr)$
and $cost\bigl(R_{\le (1-\frac{1}{n})}\bigr)$.
Then, using these notations:
\[
\text{OPT}_{\text{org}}
=
cost\bigl(R_{> (1-\frac{1}{n})}\bigr)
+
cost\bigl(R_{\le (1-\frac{1}{n})}\bigr)
+
C,
\]
where $C$ is the total cost for performing services. We analyze two cases:
\paragraph*{Case 1 -- $R_{> \left(1-\frac{1}{n}\right)}$:}
Serving each request $\rho$ in this set individually costs at most $\hat{c}_\rho\left(t'\right) + c(P(v_\rho))$, where $v_\rho$ is the request's node and $P(v_\rho)$ is the path from $v_\rho$ to the root of the tree. Since $\hat{c}_\rho(t') \le 1$ as $t'\in T_\rho$, and the cost of that path is at most $2$ by the $2$-decreasing property, the total cost for serving $\rho$ is at most $3$.
In addition, since
\[
cost\bigl(R_{> (1-\frac{1}{n})}\bigr)
>
\left(1-\frac{1}{n}\right) \cdot \left|R_{> (1-\frac{1}{n})}\right|,
\]
then the total cost in $\text{OPT}_{\text{new}}$ for these requests (including the cost of performing the services and the cost of the penalties) is at most
\[
3 \cdot \left|R_{> (1-\frac{1}{n})}\right|
<
\frac{3}{1-\frac{1}{n}} \cdot cost\bigl(R_{> (1-\frac{1}{n})}\bigr)
\le
6 \cdot cost\bigl(R_{> (1-\frac{1}{n})}\bigr)
\]
for $n \ge 2$.

\paragraph*{Case 2 -- $R_{\le \left(1-\frac{1}{n}\right)}$:}
First, we show that in this case, $t'$ is the end of the global interval containing $t$. Indeed, since both $t$ and $t'$ lie in the same interval of length $\frac{1}{n}$, we have
\[
\hat{c}_\rho\left(t'\right) \le c_\rho(t) + \frac{1}{n} \le  \left(1-\frac{1}{n}\right) + \frac{1}{n} = 1.
\]
Therefore, $t' \in T_\rho$ as desired. 
Second, since $c_\rho(t) \ge \frac{1}{n}$ from the first step, we also conclude that $\hat{c}_\rho\left(t'\right) \le 2c_\rho(t)$.
Since $t' \ge t$, all requests in $R_{\le \left(1-\frac{1}{n}\right)}$ that were served at time $t$ in $\text{OPT}_{\text{org}}$ can be served together at time $t'$. This implies that these requests are served using the same number and structure of services as in $\text{OPT}_{\text{org}}$, and therefore no additional service cost is incurred for $R_{\le \left(1-\frac{1}{n}\right)}$.
Combining both cases, we get that 
\[
\text{OPT}_{\text{new}} 
\le 6 \cdot cost\left(R_{> \left(1-\frac{1}{n}\right)}\right) + 2\cdot cost\left(R_{\le \left(1-\frac{1}{n}\right)}\right) + C 
\le 6 \cdot \text{OPT}_{\text{org}}.
\]
\end{enumerate}
\end{proof}

\paragraph*{Maintaining the reductions in an online fashion} We assume that $n$ and $W$ are known in advance, and later explain how to get rid of this assumption, losing only a constant factor in the competitive ratio. We also take care of maintaining Lipschitz constant 1. The \textbf{first step} is applied independently to each request, without requiring any knowledge of other requests. Therefore, this step remains valid as new requests arrive online. The \textbf{second step}, which constructs a common discrete domain $T$, can also be maintained incrementally in the online setting. Specifically, we build $T$ in the order that requests arrive, by continuously adding new global intervals of length $1/n$, from each new request’s penalty function.

\subsection{Reduction to windows of length W} \label{subsec:window}
We can restrict attention to time intervals of length $W$ only, losing a constant factor in the competitive ratio. As a result, we may assume that the number of requests is $\bar{n}$, the maximum number of requests that arrive within any window of length~$W$. Typically, $\bar{n} \ll n$. 

To do so, we maintain two sequential, disjoint \emph{live} copies of the algorithm at any time, each spanning an interval of length $2W$ and overlapping by $W$. Specifically, every request arriving in the interval $[0, W)$ is assigned to the first copy, and every request arriving in $[W, 2W)$ is assigned to the second copy. Requests assigned to the first copy may be served up to time $2W$, while requests in the second copy may be served up to time $3W$.

More generally, for every $z \in \mathbb{Z}_{\ge 0}$, the even copy indexed by $2z$ contains all requests arriving in the interval $[2zW, (2z+1)W)$, and the odd copy indexed by $2z+1$ contains all requests arriving in $[(2z+1)W, (2z+2)W)$. Each copy is active for a duration of $2W$, and the algorithm is executed independently on each active copy, considering only the requests assigned to it.

Note that the feasibility of a solution is not violated by this partitioning method, as each request can still be served in its original service time. The total cost incurred by the algorithm is at most the sum of the costs over all copies. Since the windows of all even copies are disjoint, and likewise the windows of all odd copies are disjoint, the optimal solution must pay at least the maximum cost among the even copies or the odd copies. Consequently, this transformation increases the competitive ratio by at most a factor of~$2$.


\subsection{Eliminating the need to know n and W in advance}
\label{subsec:doubling}

Our algorithm requires an upper bound on the global discretization scale $\tfrac{1}{n}$ to extract the potential time units for service from each penalty function. Thus, we now explain why one may assume prior knowledge of the number of requests $n$, and the maximum length of a service window $W$, while incurring only a constant-factor loss in the competitive ratio. We also note that due to the normalization, we consider an update of the Lipschitz parameter by updating the parameter $W$.

\begin{theorem}
Let $ALG$ be an $O(\beta)$-competitive algorithm for MLAP under the assumption that the parameters $n$ and $W$ are known in advance, where
$
\beta \;=\; \log(n)\,\log\!\left( nDW  \right),
$
and where $D$ is fixed and known in advance. Then, there exists an algorithm $ALG'$ for MLAP that achieves the same asymptotic bound $O(\beta)$ without prior knowledge of $n$ and $W$.
\end{theorem}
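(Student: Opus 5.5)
We will use a standard guess-and-double scheme run in phases. The algorithm $ALG'$ keeps current guesses $\tilde n$ and $\tilde W$. It starts with $\tilde n = 2$ and $\tilde W$ equal to the service window of the first request, and runs a fresh copy of $ALG$ with these parameters. A phase ends as soon as some arriving request violates a guess: either the number of requests in the phase exceeds $\tilde n$, or a request's (renormalized) service window exceeds $\tilde W$. When that happens, we square the violated guess ($\tilde n \leftarrow \tilde n^2$ and/or $\tilde W \leftarrow \tilde W^2$) and start a new phase with a new copy of $ALG$. That copy only sees requests arriving from that moment on. Every request is handled entirely by the copy that was active when it arrived, and that copy may still perform services after the phase ends, so feasibility is preserved because each copy is itself a feasible online algorithm on its own request set. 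We choose squaring rather than doubling because $\beta$ depends on $n$ and $W$ only through logarithms, so squaring doubles $\log \tilde n$ and $\log \tilde W$ and makes the per-phase bounds form a geometric series.

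\textbf{Bounding each phase.} Let $R_j$ be the requests of phase $j$. By the guarantee of $ALG$ (valid because within phase $j$ the guesses upper bound the true parameters of $R_j$), the cost of copy $j$ is at most $O(\beta_j)\cdot \mathrm{OPT}(R_j)$, where $\beta_j = \log \tilde n_j \log(\tilde n_j D \tilde W_j)$. Since $R_j \subseteq R$ and penalties are nonnegative, restricting an optimal solution of the whole instance to $R_j$ gives $\mathrm{OPT}(R_j) \le \mathrm{OPT}$.

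\textbf{Summing over phases.} Index the phases by which guess was violated. The $\log \tilde n$ values form a geometric sequence capped at $O(\log n)$ (the final guess is at most $n^2$), and likewise $\log \tilde W$ is capped at $O(\log W)$. Hence $\sum_j \beta_j \le \sum_j \log\tilde n_j(\log \tilde n_j + \log D + \log \tilde W_j)$, and each geometric sum is dominated by its last term, which is $O(\beta)$. This gives the total bound $O(\beta)\cdot\mathrm{OPT}$.

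\textbf{Main obstacle.} The difficulty is the interaction between the $\tilde W$ guess and the normalization. Changing the Lipschitz scaling changes time stretching, and with it the discretization grid from Section~\ref{sec:discretization}. We handle this, as the paper notes, by treating a Lipschitz update as a $W$ update, and by never rescaling an old copy: each copy keeps its own grid, built with its own $\tilde n$ as the $1/\tilde n$ scale. A second subtlety is that the number of phases combines updates of both parameters, which a priori could multiply. We avoid this by charging each phase to the guess it updated and noting that the other factor is at most its final value, so the sum is at most $2\times$ the product of the two geometric sums' dominant terms, still $O(\beta)$. If the $\mathrm{OPT}(R_j)\le \mathrm{OPT}$ charging proves too lossy, we can instead confine phases to disjoint time windows as in Subsection~\ref{subsec:window}.
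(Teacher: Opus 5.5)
\textbf{The gap is in the summation over the phases triggered by the $W$-guess.} Squaring $\tilde W$ doubles $\log\tilde W$. But the quantity that must grow geometrically is $\beta_j=\log\tilde n_j\log(\tilde n_jD\tilde W_j)$. Along $W$-triggered phases, the part $\log\tilde n_j(\log\tilde n_j+\log D)$ does not grow. It is therefore paid once per $W$-phase rather than forming a geometric series, and ``charging each phase to the guess it updated'' does not change this.

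The number of $W$-phases is not bounded in terms of $n,D,W$: it can be as large as $\log_2(\log W/\log\tilde W_1)$, which blows up when the first window is close to $1$. Two further problems come from the same rule. For $\tilde W_1=1$, squaring does not move the guess at all. More generally, squaring the old guess need not dominate a newly observed window, so your validity claim that ``the guesses upper bound the true parameters of $R_j$'' is not guaranteed.

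This is a failure of the algorithm, not only of the analysis. Take $D=1$, $\delta=2^{-k}$ and $w_j=(1+\delta)^{2^{j-1}}$. Place $k$ requests at the leaf, arriving in order within $[0,\delta/2]$. Request $j$ has penalty $0$ from its arrival until $w_j-1$ time units later, and then rises with slope $1$.
\begin{itemize}
\item Its service window has length $w_j$, and all penalties vanish at time $\delta/2$, so $\mathrm{OPT}=1$.
\item We have $n=k$ and $W=w_k<2$, so $\beta=O(\log^2 k)$.
\item In your scheme, the current guess before request $j\ge 2$ is $\tilde W=w_{j-1}$, and $w_j=\tilde W^2$. So every request opens its own phase, and the per-phase count never exceeds $\tilde n=2$.
\item Each copy pays at least the root edge, so $ALG'\ge k$.
\end{itemize}
Confining phases to disjoint time windows does not rescue this, since all requests arrive essentially together.

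\textbf{The missing idea is the paper's update rule for $W$.} On a violation at time $t$ it sets $W_{i+1}=n(t)\,D\,W(t)^2$, computed from the current \emph{true} values rather than from the old guess. This guarantees two things:
\begin{itemize}
\item the new guess dominates the observed window;
\item more importantly, the whole second factor $\log(n_iDW_i)$, not just $\log W_i$, grows by a constant factor at every $W$-update.
\end{itemize}
In the example above, a single such update already lifts the guess above $2$ and ends all $W$-phases. Together with $n_{i+1}=n(t)^2$ on $n$-violations, every phase then increases $\beta_i$ by at least a constant factor, so $\sum_i\beta_i=O(\hat\beta)$. The bounds $\hat n\le n^2$ and $\hat W\le nDW^2$ give $\hat\beta=O(\beta)$. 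Combined with $\mathrm{OPT}_i\le\mathrm{OPT}$, which you also use, this completes the argument.

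Apart from this update rule, your structure matches the paper's: restart a fresh copy in each phase, square the $n$-guess, and charge each phase against the global optimum.
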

\label{thm_knowing_parameters}


\begin{proof}
We apply a variant of the doubling technique that simultaneously handles all unknown parameters.
The algorithm dynamically maintains estimates of $n$ and $W$, updating them whenever they are violated. At any moment, the algorithm tracks the true values of these parameters, but uses them only to update the estimates at the beginning of a new phase.

Let $n_i$, and $W_i$ denote the estimates used in phase~$i$, and let $n(t)$,  and $W(t)$ be the true parameter values at time~$t$.

We note that the parameter $W$ is defined only after normalizing the penalty function via the following steps:
\begin{enumerate}
    \item Divide the function by the minimum positive edge weight incident to the root.
    \item Subtract the minimum value of the function from all values.
    \item Rescale the function to be $1$-Lipschitz. That is, if $f(x)$ is an $L$-Lipschitz, we instead use the function $f\!\left(\frac{x}{L}\right)$, and define $W$ accordingly.
\end{enumerate}

After the arrival of the first request, we set $W_1$ to the true values revealed by that request, and set $n_1 = 2$.
Whenever one of the estimates is violated, a new phase begins, the algorithm is restarted and the estimates are updated as follows:

\begin{itemize}
    \item If $n(t) > n_i$, update
    \[
        n_{i+1} \gets (n(t))^2.
    \]

    \item If $W(t) > W_i$, update
    \[
        W_{i+1} \;\gets\; n(t)\cdot D\cdot (W(t))^2.
    \]

\end{itemize}
Two key facts:
\begin{enumerate}
    \item For every request $\rho$, since the original arbitrary penalty function have values from $0$ to $1$ within its service window, it follows that $   w_\rho \ge\; 1 $ and hence at every time~$t$,
    \[
        W(t) 
        \;=\;
        \max_{\rho \le t} w_\rho 
        \;\ge\; 1.
    \]

    \item If several parameters are updated simultaneously, since the update rules use the \emph{current} value of each parameter, there is no circular dependency.
\end{enumerate}
To show that all updates are feasible, we use the following lemma: 
\begin{lemma}
For every phase $i \ge 1$:
\begin{enumerate}
    \item Each parameter strictly increases when updated:  
    \[
        x_{i+1} > x_i,
        \qquad x \in \{n,W\}.
    \]
    \item For every time $t$ during phase $i$, each estimate upper bounds the true value:
    \[
        n(t) \le n_i, \qquad W(t) \le W_i .
    \]
\end{enumerate}
\end{lemma}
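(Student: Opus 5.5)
The plan is to argue by induction on the phase index $i$, proving both parts of the lemma together. Part 2 is really an invariant of the restart rule: phase $i$ continues exactly as long as neither estimate is violated. The only content is that the new estimates chosen at a restart already dominate the true values observed at that moment. Part 1 is the statement that, whenever an update is triggered, the new value strictly exceeds the old one.

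\textbf{Base case.} For $i=1$ the phase starts right after the first request, where $n(t)=1\le 2=n_1$. Also $W_1$ is set to the true value $W(t)$ revealed by that request, so Part 2 holds at the start of phase $1$.

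\textbf{Inductive step for $n$.} Suppose a new phase begins at time $t$ because $n(t)>n_i$. Then
\[ n_{i+1}=(n(t))^2\ge n(t)>n_i, \]
using $n(t)\ge 1$. This gives strict increase and also $n(t)\le n_{i+1}$ at the restart time. If $n$ is not violated at this restart, it keeps its old value, so $n(t)\le n_i=n_{i+1}$ still holds. (In that case the monotonicity claim of Part 1 should be read as \emph{non-decreasing}, with strict increase only when an update actually happens. I would state this explicitly, since the lemma says "when updated.")

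\textbf{Inductive step for $W$.} If $W(t)>W_i$, then
\[ W_{i+1}=n(t)\,D\,(W(t))^2\ge (W(t))^2\ge W(t)>W_i, \]
using $n(t)\ge 1$, $D\ge 1$, and the first key fact $W(t)\ge 1$. This gives both strict increase and $W(t)\le W_{i+1}$. When both parameters are violated at once, the second key fact says each update uses only current true values, so there is no circularity and the two arguments apply independently.

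\textbf{Within a phase.} Finally, I would observe that for any time $t'$ strictly inside phase $i+1$, the bounds $n(t')\le n_{i+1}$ and $W(t')\le W_{i+1}$ hold by definition of the phase. A violation would have triggered a new phase at the first moment it occurred, and the true parameters $n(\cdot)$ and $W(\cdot)$ are monotone non-decreasing in time, being a count and a running maximum.

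\textbf{Main obstacle.} The subtle point is making sure the restart is well-defined at the moment of violation. The estimates are updated at that very instant, so the phase boundary has to be placed at the violation time, with the new estimates taking effect immediately. The step I would check most carefully is the $W$ inequality, which needs $W(t)\ge 1$; this comes from the normalization, since the penalty goes from value $0$ to value $1$ within the window and is $1$-Lipschitz.
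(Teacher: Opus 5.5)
Your proposal is correct and follows essentially the same route as the paper: the base case comes from the initialization $n_1=2$ and $W_1=W(t)$, and each update rule is checked directly using $n(t)\ge 1$, $D\ge 1$ and $W(t)\ge 1$ to obtain $n_{i+1}\ge n(t)>n_i$ and $W_{i+1}\ge W(t)>W_i$. Beyond what the paper writes, you make explicit that a non-updated parameter keeps its value and that the within-phase bounds hold by the restart rule; these are clarifications of points the paper leaves implicit, not a different argument.
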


\begin{proof}
In the first phase, $W_1$ is equal its true initial value and $n_1 = 2$ trivially upper bounds the single request observed so far. We check each possible update:

\begin{enumerate}
    \item If $n(t) > n_i$, then since $n_1 = 2$,  
    \[
        n_{i+1} = (n(t))^2 > n(t) > n_i.
    \]

    \item If $W(t) > W_i$, then using $W(t) \ge W_1$, $n(t) \ge 1$, $D\ge1$, and $W(t)  \ge 1$,
    \[
        W_{i+1}
        = n(t)\cdot D \cdot{(W(t))^2} \cdot 
        \;\ge\; W(t)
        \;>\; W_i.
    \]

\end{enumerate}
\end{proof}
Next we analyze the change in the competitive ratio after each update. The competitive ratio of the algorithm in phase $i$ is $\gamma \beta_i$, where $\gamma$ is some constant, $\beta_i= \log(n_i)\,\log\!\left(n_iD W_i \right)$, while using estimates $n_i$ and $W_i$.  
Then,
\begin{equation} \label{eq:competitive}    
ALG'
= \sum_i ALG_i
\;\le\;
\sum_i \gamma \beta_i \cdot OPT_i
\;\le\;
OPT \cdot \gamma\sum_i \beta_i.
\end{equation}
We now show that the competitiveness at least doubles in every phase, i.e., for every phase $i$, $\beta_{i+1} \;\ge\; 2\beta_i$. We analyze all three update cases.
\paragraph*{Case 1: Update triggered by $n$.}
Using $n(t) > n_i$,
\[
\begin{aligned}
\beta_{i+1}
&= \log((n(t))^2)\,
   \log\!\left( (n(t))^2 D W_i  \right)
\ge
2\log(n_i)\,
  \log\!\left( Dn_i  W_i  \right)
= 2\beta_i.
\end{aligned}
\]

\paragraph*{Case 2: Update triggered by $W$.}
Using $W(t) > W_i \ge W_1$,
\[
\begin{aligned}
\beta_{i+1}
&= \log(n_i)\,
   \log\!\left(
        n_iD \cdot n(t) \cdot D \cdot{(W(t))^2}       
   \right)
\ge
\log(n_i)\,
\log\!\left(
        n_i^2 D^2 W_i^2 
    \right)
= 2\beta_i.
\end{aligned}
\]

Let $\hat{n}$ and $\hat{W}$ be the final parameter values at the last phase, and let $\hat{\beta}$ denote the competitive ratio of the algorithm at the last phase.  
By the update rules,
\[
\hat{n} \le n^2,
\qquad
\hat{W} \le n D {W^2}.
\]
Thus,
\[
\begin{aligned}
\hat{\beta}
& =
\log\left(\hat{n}\right)\log\left(\hat{n}\hat{W}\right)
\le
\log(n^2)\,
\log\!\left(
       n^2 \cdot
       nD {W^2}
\right)
\\&\le
2\log(n)\,
 \log\!\left(
     n^3 D^3 W^3 
 \right)
=
6 \log(n)\,\log\!\left(nDW\right)
= 6\beta,
\end{aligned}
\]

Since $\beta_{i+1} \ge 2\beta_i$ and $\hat\beta \le 6\beta$,
\[
\sum_i \beta_i \;\le\; 12\,\beta.
\]
Combining with equation \ref{eq:competitive} completes the proof.
\end{proof}

%% file: Sections/4.reduction_imp.tex
\section{Reduction from MLAP to IMP}
\label{sec:reduction_imp}

\begin{figure}[t]
    \centering
    \includegraphics[width=0.85\linewidth]{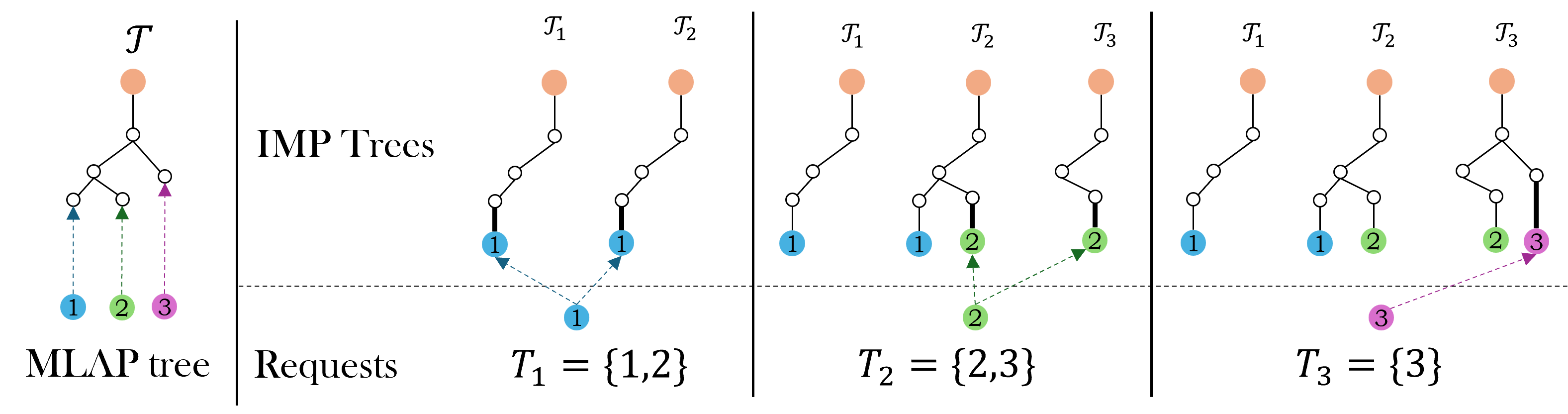}
    \caption{
    Each discrete time point induces a copy of the MLAP tree containing the path from the request’s node to the root, plus a new node and edge (in bold) encoding the penalty.
The first (blue) request has two service points, corresponding to trees $\mathcal{T}_1$ and $\mathcal{T}_2$. The second (green) request also has two service points: the first reuses an existing tree, adding only the missing part of the path. The third (purple) request has a single service point, corresponding to a new node added to an existing tree.
    }
    \label{fig:reduction_to_IMP}
\end{figure}

In this section, we present the third reduction, from MLAP with $2$-decreasing trees and discrete penalty functions to the Incremental Multicast Problem. This reduction is novel and establishes a connection between these two classic online problems.

\begin{theorem}
Let $ALG_{\text{IMP}}$ be an $\alpha$-competitive algorithm for $\text{IMP}(n,m)$, where $n$ is the number of requests and $m$ is the total number of edges in the resulting trees. Then, there exists an $\alpha$-competitive algorithm $ALG_{\text{MLAP}}$ for $\text{MLAP}_{\text{2-Discrete}}(n,D,T)$, where $D$ is the maximum depth of the trees in $\mathcal{T}$, $T$ is the final set of potential service points across all penalty functions and $n$ is the number of requests. Moreover, the number of edges satisfies $m = O(n D |T|)$.
\end{theorem}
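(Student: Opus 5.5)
The plan is to build the IMP instance online, keeping one copy of the MLAP tree for each discrete time point. For every $t \in T$ there will be a tree $\mathcal{T}_t$, a copy of the rooted tree $\mathcal{T}$ that is materialized only on demand.

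When a request $\rho = (v_\rho, a_\rho, \hat c_\rho)$ with discrete domain $T_\rho$ arrives, we create the corresponding IMP request as follows:
\begin{itemize}
\item For each $t \in T_\rho$, we create a fresh leaf $u_{\rho,t}$ in $\mathcal{T}_t$.
\item The leaf hangs by a new edge of cost $\hat c_\rho(t)$ below the copy of $v_\rho$ in $\mathcal{T}_t$.
\item If the root-path of $v_\rho$ is not yet fully present in $\mathcal{T}_t$, we also add its missing edges with their original MLAP weights. If $\mathcal{T}_t$ does not exist yet, this path initiates it.
\end{itemize}
We set $S_\rho = \{u_{\rho,t} : t\in T_\rho\}$ and let $\mathcal{P}_{u_{\rho,t}}$ be the new edges just described. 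These are genuinely new edges, and adding them preserves the tree property, so this is a legal IMP request.

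The edge count is immediate. Each request adds, per point of $T_\rho \subseteq T$, at most $D$ tree edges plus one penalty edge. Hence $m \le n(D+1)|T| = O(nD|T|)$.

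\textbf{Step 1: IMP solution to MLAP solution.} The algorithm $ALG_{\text{MLAP}}$ runs $ALG_{\text{IMP}}$ and translates its output online. When $ALG_{\text{IMP}}$ covers request $\rho$ by a root path in $\mathcal{T}_t$ through $u_{\rho,t}$, we serve $\rho$ at time $t$ using the subtree of $\mathcal{T}$ corresponding to the edges chosen so far in $\mathcal{T}_t$.

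This service is legal because every $t\in T_\rho$ lies in the service window, so $t \ge a_\rho$. To keep the translation online, we should argue as in the discretization section that $T_\rho$ can be taken to consist only of points $\ge a_\rho$, and that requests arrive by time $t$.

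For the cost comparison, all services at time $t$ are taken together as the union of the chosen edges of $\mathcal{T}_t$. Each edge of $\mathcal{T}_t$ is then paid at most once in MLAP, just as in IMP, and each penalty $\hat c_\rho(t)$ equals the cost of the penalty edge that IMP bought. So $ALG_{\text{MLAP}} \le ALG_{\text{IMP}}$.

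\textbf{Step 2: MLAP optimum to IMP solution.} Take the optimal MLAP solution and, for each service time $t$, buy in $\mathcal{T}_t$ the union of the served subtrees together with the penalty edges $u_{\rho,t}$ of the requests served at $t$. This is a feasible IMP solution of the same cost, so $OPT_{\text{IMP}} \le OPT_{\text{MLAP}}$. Combining the two steps gives $ALG_{\text{MLAP}} \le \alpha\, OPT_{\text{MLAP}}$.

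The main obstacle is the online timing in Step 1. $ALG_{\text{IMP}}$ may buy an edge of $\mathcal{T}_t$ for request $\rho$ when $\rho$ arrives, possibly at a real time later than $t$, while the MLAP service at $t$ can no longer be performed. I would handle this by restricting $T_\rho$ to points at or after the arrival time and noting that requests are revealed by their arrival. Since each IMP request is answered immediately at arrival, the translated service at $t\ge a_\rho$ is scheduled in the future.

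If several requests are connected to the same $\mathcal{T}_t$ at different moments before $t$, the MLAP service at $t$ is executed once, using all edges bought in $\mathcal{T}_t$ up to time $t$. This single execution is what ensures each edge is paid only once.
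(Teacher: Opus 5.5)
Your proposal is correct and follows essentially the same route as the paper. Both build one incrementally grown copy $\mathcal{T}_t$ of the MLAP tree per discrete time point and attach a fresh leaf for each pair $(\rho,t)$ by an edge of cost $c_\rho(t)$; both then perform a single service at each time $t$ from the edges bought in $\mathcal{T}_t$, ignoring edges bought after $t$, which is harmless since $T_\rho$ contains only times $\ge a_\rho$, and both map an MLAP solution back to obtain $OPT_{\text{IMP}} \le OPT_{\text{MLAP}}$, with the same $m \le n(D+1)|T|$ edge count.
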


The key idea behind this reduction is the analogy between potential service time units and facilities. A client’s connection cost to a facility is derived from the penalty cost of serving the corresponding request at that time unit. 
\begin{proof} 
\textbf{(1) Input MLAP $\Longrightarrow$ IMP:}
Let $(\mathcal{T}_{MLAP}, \mathcal{R}_{MLAP})$ be an instance of $\text{MLAP}_{\text{2-Discrete}}$, where $\mathcal{T}_{MLAP}$ is a $2$-decreasing weighted rooted tree and $\mathcal{R}_{MLAP}$ is a sequence of requests arriving over time. For simplicity, we assume that $\mathcal{T}_{MLAP}$ consists of a single $2$-decreasing tree rooted at node $r$, with edge cost function $c$. We denote by $c(v)$ the cost of the edge connecting node $v$ to its parent. The reduction naturally extends to a family of disjoint trees.  

Algorithm $ALG_{\text{IMP}}$ starts with an empty family of trees $\mathcal{T}_{IMP}$ and an empty sequence of requests $\mathcal{R}_{IMP}$. Remark that each request $\rho \in \mathcal{R}_{MLAP}$ is defined by the tuple $(v_\rho, a_\rho, c_\rho)$, where $v_\rho$ is a leaf node in $\mathcal{T}_{MLAP}$ associated with the request, $a_\rho$ is the arrival time of the request, and $c_\rho$ is a discrete penalty function defined over the set of potential service time units $T_\rho$. We denote by $P(v_\rho)$ the path from $v_\rho$ to the root $r$.

Each time unit $t \in T_\rho$ corresponds to a tree $\mathcal{T}_t' \in \mathcal{T}_{IMP}$ rooted at node $r_t$. In each such tree, the request is associated with a new leaf node $v_\rho^{(t)}$. This tree is a duplication of the original MLAP tree $\mathcal{T}$, restricted to the path $P(v_\rho)$ plus the new leaf node.  

Formally, for every request $\rho \in \mathcal{R}_{MLAP}$ (in the order of arrival), we create a corresponding request $\rho' \in \mathcal{R}_{IMP}$ defined as $(S_{\rho'}, \{\mathcal{P}_{v}\}_{v \in S_{\rho'}}, c_{\rho'})$
with the following components:
\begin{itemize}
    \item $S_{\rho'} = \{v_\rho^{(t)} \mid t \in T_\rho\}$.
    \item $\{\mathcal{P}_{v_\rho^{(t)}}\}_{v_\rho^{(t)} \in S_{\rho'}}$ is the collection of the incremental paths. Each path $\mathcal{P}_{v_\rho^{(t)}}$ is constructed by duplicating the path $P(v_\rho)$ from $\mathcal{T}$ into the tree $\mathcal{T}_t' \in \mathcal{T}_{IMP}$, taking only the portion of the path that does not already exist in that tree. Finally, a new leaf node $v_\rho^{(t)}$ is introduced, connected to the original node $v_\rho$.
    \item The cost function $c_{\rho'}$ on the new edges is inherited from the original edge costs in $\mathcal{T}$, except the only one edge connecting $v_\rho^{(t)}$ to $v_\rho$ which has the corresponding penalty cost in that time $c_\rho(t)$.
\end{itemize}
We denote by $T_{\rho'} = \{r^{(t)} \mid t \in T_\rho\}$ the set of all roots corresponding to each node in $S_{\rho'}$.  
Note that $T_\rho$ represents potential times to serve request $\rho$ in the MLAP, while $T_{\rho'}$ represents the corresponding roots in the constructed IMP trees. 
Although the notation may seem confusing at first, the precise correspondence between these sets  is the key to the reduction.  Figure~\ref{fig:reduction_to_IMP} illustrates the main idea of the reduction.

\textbf{(2) Apply algorithm $ALG_{\text{IMP}}$.} Upon the arrival of a request $\rho$, after converting it as described above, algorithm $ALG_{\text{MLAP}}$ simulates algorithm $ALG_{\text{IMP}}$ on the new request $\rho'$ using the demand $(S_{\rho'},T_{\rho'})$.  

\textbf{(3) Output IMP $\Longrightarrow$ MLAP:}
Let $E_{\rho'}$ denote the set of edges from $\mathcal{T}_{IMP}$ that $ALG_{\text{IMP}}$ adds to the solution to satisfy $\rho'$. $ALG_{\text{MLAP}}$ collects all these edges.  
At each time $t$ for which the IMP solution contains edges from a tree $\mathcal{T}_t'$, $ALG_{\text{MLAP}}$ performs a service. This service consists of the rooted subtree formed by those edges, and a request $\rho$ is included in this service only if its extra edge also appears in the IMP solution.  
In particular, if the IMP algorithm adds edges belonging to a tree corresponding to a past time unit, those edges are ignored in the MLAP solution. Therefore, the feasibility condition of making a service before time is expired, is preserved.  

In addition, each new request connects only to trees corresponding to its arrival time and future time units. Since the IMP algorithm satisfies each request immediately and never removes edges once added, all such edges remain in both solutions and collectively yield a feasible solution for the MLAP.

The cost of algorithm $ALG_{\text{MLAP}}$ is at most the cost of algorithm $ALG_{\text{IMP}}$, since every payment of $ALG_{\text{MLAP}}$ corresponds directly to a payment of $ALG_{\text{IMP}}$, either as part of the service cost (inner edges copied from $\mathcal{T}$) or as a penalty (the extra edge corresponding to the request). Note that some payments of $ALG_{\text{IMP}}$ may be ignored in $ALG_{\text{MLAP}}$.
From the opposite direction, using the reduction, any solution for the MLAP can be directly transformed into a valid solution for the corresponding IMP instance. Therefore, $OPT_{\text{IMP}} \le OPT_{\text{MLAP}}$, so,
$ALG_{\text{MLAP}} \le ALG_{\text{IMP}} \le \alpha \cdot OPT_{\text{IMP}} \le \alpha \cdot OPT_{\text{MLAP}}$.
Finally, the total number of edges in the resulting IMP instance is at most $O(nD|T|)$, since each request contributes at most $D+1$ edges across at most $|T|$ trees.  
\end{proof}

%% file: Sections/5.algorithm_imp.tex
\section{Algorithm for the online IMP} 
\label{sec:algorithm}
To complete the discussion on the MLAP with arbitrary penalty functions and to obtain a full algorithm for that problem, it remains to describe an algorithm for the IMP. In this section, we present a randomized algorithm for the online IMP, based on the randomized algorithm of Alon et al.\ for the online multicast problem \cite{Al:04}. The adapted algorithm yields the following theorem:
\begin{theorem}
\label{thm:IMP}
There exists a randomized $O\!\left(\log {n'} \log {m'} \right)$-competitive algorithm for the Incremental Multicast Problem in trees, where $n'$ is the current number of requests and $m'$ is the current number of edges in the trees.
\end{theorem}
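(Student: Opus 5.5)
We follow the two-phase scheme of Alon et al.~\cite{Al:04}: an online fractional algorithm for IMP that is $O(\log m')$-competitive, then an online randomized rounding that loses a further $O(\log n')$ factor. Since neither $n'$ nor $m'$ is known in advance and edges appear over time, both phases must be run with current values. Any dependence on a guessed $m'$ or $n'$ is handled by a doubling argument in the style of the earlier doubling theorem (restart when the current count exceeds the estimate, square the estimate), which changes the bound only by a constant.

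\textbf{Fractional phase.} We keep weights $w_e\in[0,1]$ and initialize each newly created edge with a small weight $1/m_i$, where $m_i$ is the current estimate of the number of edges. The total initialization cost is at most $\sum_e c_e/m_i$. Each such edge lies in some optimal-feasible structure, so this is at most $\mathrm{OPT}$ after we restrict to edges with $c_e\le \mathrm{OPT}$; edges costlier than the current guess of $\mathrm{OPT}$ are excluded, with that guess also maintained by doubling. When a request $\rho$ arrives and the max-flow from $S_\rho$ to $T_\rho$ is below $1$, we take a minimum cut $C$ separating $S_\rho$ from $T_\rho$ (on trees this is a collection of edge sets, one per path) and multiplicatively raise $w_e\leftarrow w_e(1+1/c_e)$ for $e\in C$ until the flow reaches $1$. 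For the analysis we use the standard potential $\Phi=\sum_{e\in \mathrm{OPT}} c_e\log_2(1/w_e)$. Each augmentation step costs at most $O(1)$ since the cut has weight below $1$. Each step also decreases $\Phi$ by $\Omega(1)$, because $\mathrm{OPT}$ contains a full path from $S_\rho$ to $T_\rho$ and hence crosses $C$. Since $\Phi\le \mathrm{OPT}\log m'$ initially and $\Phi\ge 0$ always, the fractional cost is $O(\log m')\cdot\mathrm{OPT}$.

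\textbf{The incremental aspect.} This is the point requiring care. Two things can go wrong: $\mathrm{OPT}$ edges may appear after $\Phi$ was defined, and new paths may increase flow capacity. We handle both by charging each edge's contribution to $\Phi$ from its creation time. Its initial potential term is $c_e\log m_i$, paid once, so the bound is unchanged. The new paths themselves are harmless, because in the incremental model the old edges are never modified.

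\textbf{Rounding phase (main obstacle).} For each edge we draw $2\lceil\log n'\rceil$ independent thresholds uniformly in $[0,1]$ and include $e$ once $w_e$ exceeds the minimum threshold. For trees we must also include $e$ only when its parent edge is included, so we use the correlated rule of \cite{Al:04}: the condition is on the ratio $w_e/w_{\mathrm{parent}(e)}$. This makes the path from a node to its root appear with probability roughly equal to the flow it carries. The expected rounding cost is then $O(\log n')$ times the fractional cost. A given request fails to be connected with probability at most $1/n'^2$, and if it fails we connect it by its cheapest path, of cost at most $\mathrm{OPT}$. A union bound makes the expected extra cost $O(\mathrm{OPT}/n')$. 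The hard step is showing the correlated tree rounding still works when edges and paths are added after thresholds were drawn. We plan to draw thresholds lazily at edge creation, and to argue that independence across the root-to-leaf chains, the only property used in \cite{Al:04}, is preserved because new edges only extend existing trees downward or create new trees. Combining the two phases gives $O(\log n'\log m')$.
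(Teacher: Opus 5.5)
Your fractional phase matches the paper: the same min-cut multiplicative update and the same potential argument, with an integral rather than a fractional optimum. One step is unjustified, though. The claim that every augmentation lowers $\Phi$ by $\Omega(1)$ uses $c_e\log_2(1+1/c_e)\ge 1$, and this holds only when $c_e\ge 1$. The update $w_e(1+1/c_e)$ is not scale-invariant, so you need the paper's preprocessing: buy outright every edge of cost at most $\alpha/m$ (total at most $\alpha$) and rescale the remaining costs into $[1,m]$. Discarding only the expensive edges is not enough.

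\textbf{The gap is in the rounding.}
\begin{itemize}
\item With independent thresholds per edge, the path $P_v$ from $v$ to its root is bought with probability about $\prod_{e\in P_v}\Pr[\min_j X_{e,j}<w_e]$. This can be exponentially smaller in the path length than the flow $f_v=\min_{e\in P_v}w_e$, so the $1/n'^2$ failure bound does not follow.
\item The ratio rule on $w_e/w_{\mathrm{parent}(e)}$ is the Garg--Konjevod--Ravi-style dependent rounding used when a demand can have several nodes in one tree (group Steiner tree). It fails here for three reasons:
\begin{itemize}
\item Its telescoping needs weights non-increasing down each tree, and min-cut augmentation does not maintain this.
\item The ratio drops whenever the parent edge is augmented. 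An edge already bought can then violate the rule, which is incompatible with never removing edges.
\item If, as your text suggests, you take a minimum of $q$ thresholds per edge inside this rule, the telescoping breaks. With all ratios equal to $1/q$, a depth-$L$ edge is bought with probability at least $(1-1/e)^L$ while $w_e=q^{-L}$, so the cost bound fails by a factor of order $((1-1/e)q)^L$.
\end{itemize}
\end{itemize}
You leave exactly this step as a plan, so the $O(\log n')$ rounding loss is not established.

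\textbf{The missing idea} is that no correlated rounding is needed, because each request has at most one node per tree. Use one threshold per tree, shared by all of its edges.
\begin{itemize}
\item For each tree $T$, keep $q=2\lceil\log(n'+1)\rceil$ independent uniform variables $X(T,j)$, appending new ones as $n'$ grows. Set $\theta(T)=\min_j X(T,j)$ and buy every $e\in T$ with $w_e>\theta(T)$.
\item Weights only increase and $\theta(T)$ only decreases, so purchases are monotone. An edge created later in an existing tree is simply compared with that tree's current threshold, so incrementality needs no separate argument.
\item The path from $v$ to its root is bought exactly when $\theta(T_v)<f_v$, where $f_v$ is the max flow from $v$ in its tree. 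Independence across the distinct trees of $S_\rho$ gives failure probability $\prod_{v\in S_\rho}(1-f_v)^q\le e^{-q\sum_v f_v}\le e^{-q}\le 1/n'^2$.
\item A union bound over the $q$ variables gives $\Pr[e\text{ bought}]\le q\,w_e$. Hence the expected cost is $O(\log n')$ times the fractional cost.
\end{itemize}
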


Most of the algorithmic components and their analysis are naturally adapted from the work of Alon et al.~\cite{Al:04}. We present the algorithm construction together with all relevant lemmas and theorems. Since our setting is a generalization of the multicast problem, we also rewrite the proofs and include them in Appendix~\ref{appendix:algorithm}.

The most fundamental difference between the problems, due to the incremental nature of the IMP, concerns the knowledge of the structure of the graph in advance. The approach by Alon et al.~\cite{Al:04} relies on constructing a fractional feasible solution, which is subsequently rounded to yield a randomized integral solution. This algorithm uses the number of edges $m$ to normalize each edge's cost and weight (fraction). To address this, we apply the same doubling technique as mentioned in Subsection \ref{subsec:doubling}: we start with an estimate $m = 2$. Each time the estimation of $m$ is exceeded, we square its current real value. After every such an update, we start a new copy of the problem and simulate the algorithm from the beginning. We take to the solution all edges that were taken in all the copies. This technique introduces an additional constant factor to the performance compared with the setting where the final $m$ is known in advance. Notice that at each moment, the algorithm uses the current number of edges $m'$, and this parameter also reflected in the competitive ratio.

At a high level, the randomized algorithm for the IMP proceeds as follows:





\begin{algorithm}[!htbp]
\caption{\textbf{Randomized Algorithm for the IMP}
on $(S_\rho, \{\mathcal{P}_v\}_{v\in S_\rho}, c_\rho)$}
\label{alg:imp}

\small
\textbf{1. Preprocessing step.}
Remove every edge $e \in \{\mathcal{P}_v\}_{v\in S_\rho}$ with $c_\rho(e) > \alpha$, together with all edges in the path downwards until the request node.
Add to the solution all edges with $c_\rho(e) \le \frac{\alpha}{m}$.
Normalize the cost of the remaining edges to lie in $[1,m]$.

\medskip
\textbf{2. Graph increment.}
Add to the current trees $\mathcal{T}$ all new nodes and edges in $\{\mathcal{P}_v\}_{v\in S_\rho}$. Each new edge is assigned an initial weight (fraction) of $\frac{1}{m}$.

\medskip
\textbf{3. Fractional solution.}
Perform a \emph{weight augmentation step} for the demand $(S_\rho, T_\rho)$
as described in Algorithm~\ref{alg:weight_augmentation}.

\medskip
\textbf{4. Integral solution.}
Perform a \emph{rounding step} on every tree rooted at a node in $T_\rho$
as described in Algorithm~\ref{alg:rounding}.

\end{algorithm}

\subsection{Fractional Solution}
\label{subsec:fractional}
We present a fractional algorithm that is $O(\log m')$-competitive, where $m$ is the number of edges in all trees in $\mathcal{T}$. 
To satisfy a demand $(S_\rho,T_\rho)$, the algorithm performs the \emph{weight augmentation} step described in Algorithm~\ref{alg:weight_augmentation}. We denote the cost of an edge by $c_e$ and the weight of an edge by $w_e$.


\begin{algorithm}
\caption{\textbf{Weight Augmentation Step} for a demand $(S_\rho, T_\rho)$}\label{alg:weight_augmentation}
\While{the maximum flow from $S_\rho$ to $T_\rho$ is less than $1$}{
    \For{each edge $e$ in a minimum cut between $S_\rho$ and $T_\rho$}{
      $w_{e} \leftarrow w_{e} \cdot \left(1 + \frac{1}{c_e}\right)$
    }
}
\end{algorithm}

The performance of the fractional solution is summarized in the following theorem:
\begin{theorem}
\label{thm:frac}
The fractional algorithm satisfies all requests and $O\!\left(\log m\right)$-competitive for the fractional IMP.
\end{theorem}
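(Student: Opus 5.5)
We prove the theorem by adapting the primal--dual/potential argument of Alon et al.~\cite{Al:04} to the incremental setting. The plan has three parts: feasibility of the weight augmentation, a bound on the number of augmentation steps, and a bound on the cost of each step. Throughout, assume the preprocessing of Algorithm~\ref{alg:imp} has been applied. Let $\alpha$ be a guess for $OPT$, maintained by doubling. Every remaining edge then has normalized cost in $[1,m]$, and every new edge enters with weight $1/m$. The edges bought outright in preprocessing cost at most $m\cdot\frac{\alpha}{m}=O(OPT)$ in total. Edges removed because $c_e>\alpha$ are never used by $OPT$, so removing them does not increase $OPT$.

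\textbf{Feasibility and cost of one augmentation step.} In each iteration of Algorithm~\ref{alg:weight_augmentation}, the current minimum cut $C$ between $S_\rho$ and $T_\rho$ has $\sum_{e\in C} w_e<1$ by max-flow/min-cut. The multiplicative update increases the fractional cost by $\sum_{e\in C} c_e w_e\cdot\frac{1}{c_e}=\sum_{e\in C}w_e<1$. So each iteration costs at most $1$ in normalized units, i.e.\ at most $\alpha/m\cdot$ scale per iteration after undoing the normalization. The loop terminates because weights grow geometrically, and when it stops the flow is at least $1$. Weights never decrease, so earlier demands stay satisfied. Weights may exceed $1$; we cap them at $1$, which only lowers the cost.

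\textbf{Bounding the number of iterations via a potential.} Fix an optimal integral solution $OPT^*$ (a set of edges). Define the potential
\[
\Phi=\sum_{e\in OPT^*} c_e\log_2\!\bigl(m\,\min\{w_e,1\}\bigr)\cdot\ \text{(suitably scaled)}.
\]
Equivalently, and more concretely, use $\Phi=\sum_{e\in OPT^*} c_e \log(1/w_e)$, which is nonnegative and decreasing.

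\emph{Incremental subtlety.} Edges of $OPT^*$ appear only when they are introduced. We therefore charge each edge its initial potential $c_e\log m$ at the moment it arrives. The total injected potential is then at most $c(OPT^*)\log m$, exactly as when the graph is known in advance.

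\emph{Key step.} Every minimum cut $C$ between $S_\rho$ and $T_\rho$ must intersect $OPT^*$, since $OPT^*$ contains a path from some $v\in S_\rho$ to its root. For an edge $e\in C\cap OPT^*$, the update decreases $c_e\log(1/w_e)$ by $c_e\log(1+1/c_e)\ge \Omega(1)$, using $c_e\ge1$. Hence each iteration decreases $\Phi$ by $\Omega(1)$. Since $\Phi\ge 0$, the number of iterations is $O(c(OPT^*)\log m)$ in normalized units.

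Combining this with the per-iteration cost bound gives a fractional cost of $O(\log m)\cdot OPT$. The doubling over $\alpha$ and over the estimate of $m$, as described in Subsection~\ref{subsec:doubling}, adds only a constant factor, because the bounds form geometric sums.

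\textbf{Main obstacle.} The delicate point is the incremental structure: edges of $OPT^*$ appear over time, and the normalization constant $m$ changes. I would handle this by restarting a fresh copy whenever the estimate of $m$ is exceeded, and by verifying that within a copy the argument above holds with the copy's fixed $m$. Within a copy, the only change from Alon et al.\ is that potential is injected lazily upon edge arrival, which the accounting above absorbs.
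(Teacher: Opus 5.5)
Your decomposition and potential are the paper's. The paper's $\Phi=\sum_e c_e w_e^*\log w_e$ is the negative of yours, with the same initial bound, the same gain $c_e\log(1+1/c_e)\ge 1$ per augmentation, and the same per-step cost $\sum_{e\in C}w_e<1$. It also uses the same convention that not-yet-present edges sit at their initial weight.

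As written, however, your argument proves a weaker statement. You fix an \emph{integral} optimum $OPT^*$, so you get $O(\log m)$ times the integral optimum. Competitiveness for the fractional IMP is measured against the fractional optimum, which can be smaller by a logarithmic factor, since set cover is a special case. The repair is what the paper does:
\begin{itemize}
\item weight each term of $\Phi$ by an optimal fractional $w_e^*$;
\item replace ``every min cut meets $OPT^*$'' by $\sum_{e\in C}w_e^*\ge 1$, which holds because $w^*$ carries a unit flow from $S_\rho$ to $T_\rho$;
\item justify discarding edges with $c_e>\alpha$ by the rescaling argument of Observation~\ref{obs:weights}, since ``OPT never uses them'' is no longer enough.
\end{itemize}
The integral benchmark would suffice for the later rounding, but it is not what the theorem asserts.

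Your claim that $\Phi=\sum c_e\log(1/w_e)$ is nonnegative is false. An edge is multiplied only while it lies in a cut of total weight below $1$, so it can end at any value below $1+1/c_e\le 2$. The correct floor is therefore $\Phi\ge -c(OPT^*)$ (with $\log_2$). This is the source of the additive $\alpha$ in the paper's bound $\alpha+\alpha\log m$ on the number of steps.

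Capping weights at $1$ inside the algorithm does not avoid this. It only moves the loss to steps in which an $OPT^*$ edge crosses $1$, where the decrease in $\Phi$ can be far below $1$.

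You also omit the cost of the initial weights. At most $m$ edges of normalized cost at most $m$, each at weight $1/m$, contribute at most $m$ normalized units, i.e.\ $O(\alpha)$. This must be added to the augmentation cost.
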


To prove this theorem, we use the following lemma:



\begin{lemma}
\label{lem:frac2}
The total number of weight-augmentation steps performed by the algorithm is at most
$\alpha + \alpha \log m = O\!\left(\alpha \log m\right)$,
where $\alpha$ is the cost of an optimal solution, and $m$ is the number of edges in $\mathcal{T}$.
\end{lemma}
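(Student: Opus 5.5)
We bound the number of augmentation steps with the standard potential-function argument of Alon et al.~\cite{Al:04}, adapted so that edges may appear over time. Fix an optimal integral solution $\text{OPT}$ with cost $\alpha$; after the preprocessing of Algorithm~\ref{alg:imp} every edge cost lies in $[1,m]$, and $\text{OPT}$ is, without loss of generality, a set of edges that stays feasible for every demand. We define the potential
\[
\Phi \;=\; \sum_{e\in \text{OPT}} c_e \log_2\!\left(\frac{1}{w_e}\right),
\]
where the sum ranges only over edges of $\text{OPT}$ that already exist in the current trees; edges not yet created contribute $0$. We will show three things: $\Phi$ is bounded above, $\Phi$ drops by a constant in every augmentation step, and $\Phi$ never becomes negative.

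\textbf{Step 1 (initial and incremental contribution).} Every new edge enters with weight $\frac1m$, so when an $\text{OPT}$-edge is created it adds $c_e\log m$ to $\Phi$. Summing over all edges of $\text{OPT}$ gives a total injected potential of at most $\alpha\log m$. This is the point where the incremental setting differs from the static one: the potential is not fixed at time zero, but the total amount ever added is still bounded by $\alpha\log m$. Edges that $\text{OPT}$ contains but that the preprocessing discarded cannot exist, since $c_e>\alpha$ would contradict $\text{OPT}\le\alpha$. Edges that preprocessing adds directly have cost at most $\frac{\alpha}{m}$; we treat them as having weight $1$ and contributing $0$, and account for them separately as cost $O(\alpha)$.

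\textbf{Step 2 (decrease per step).} Consider an augmentation step for a demand $(S_\rho,T_\rho)$. The current minimum cut $C$ has capacity $\sum_{e\in C}w_e<1$, because the maximum flow is below $1$. Since $\text{OPT}$ contains a path from some $v\in S_\rho$ to its root, and that path must cross $C$, at least one edge $e\in C\cap\text{OPT}$ exists, and it has $w_e<1$. Its weight is multiplied by $(1+\frac1{c_e})$, which changes $\Phi$ by
\[
-c_e\log_2\!\left(1+\tfrac{1}{c_e}\right)\le -1,
\]
using $c_e\ge1$ and $x\log_2(1+1/x)\ge 1$ for $x\ge1$. Other edges only increase in weight, so they can only decrease $\Phi$.

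\textbf{Step 3 (lower bound on $\Phi$).} Weights are only augmented while the cut has capacity below $1$, so each augmented edge has $w_e<1$ before the update. Afterwards $w_e<1+\frac1{c_e}\le2$, so $\log(1/w_e)\ge-1$ for every edge. We then need to handle the slack: either cap the weights at $1$, or bound the total negative contribution by $\sum_{e\in\text{OPT}}c_e=\alpha$. This gives $\Phi\ge-\alpha$. Combining the three steps, the number of steps is at most the total potential injected plus the possible drop below zero, namely $\alpha\log m+\alpha$, as claimed.

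The main obstacle is making Step 1 rigorous under the doubling of $m$ and the renormalization of costs to $[1,m]$. The natural fix is to argue within a single copy, where $m$ is fixed, and apply the bound per copy. I would also check that normalization does not change $\alpha$ by more than a constant, so that $\alpha$ refers to the normalized optimum.
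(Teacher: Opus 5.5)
Your proof is correct and is essentially the paper's argument. Your potential is the negative of the paper's $\Phi=\sum_{e} c_e w_e^*\log w_e$, with $w^*$ taken as the indicator of an integral optimum, and your three steps are exactly the paper's three properties: total injected potential at most $\alpha\log m$, a change of at least $1$ per augmentation via $x\log_2(1+1/x)\ge 1$ for $x\ge 1$, and $w_e<2$ bounding the overshoot by $\alpha$. The only difference is that the paper weights by a fractional optimum and uses $\sum_{e\in\mathcal{C}} w_e^*\ge 1$, giving the bound against the possibly smaller fractional optimum; your bound against an integral optimum still suffices for the final randomized guarantee.

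On your closing worry, read $\alpha$ as the optimum's cost in the preprocessed, rescaled instance where $c_e\ge 1$, which is what your Steps 1--3 actually use. The rescaling by $m/\alpha$ is not a constant factor, and it is only undone when the number of steps is converted into cost.
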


This lemma is the heart of the analysis of the fractional solution. It is based on the following potential function:
$
\Phi = \sum_{e \in E} c_e w_e^* \log w_e,
$
where $w_e^*$ is the weight of edge $e$ in the optimal solution. Although the full graph is not known in advance, for analysis purposes the lemma considers the final set of edges $E$ retrospectively. If an edge does not exist yet in the graph at a given moment, its weight is treated as the initial weight.

Combining the last lemma with the simple observation that any weight augmentation step increases the value of the algorithm by at most 1, gives us the desired result. Full proofs are provided in Appendix~\ref{appendix:fractional}.

\subsection{Randomized Solution}
\label{subsec:rounding}
We now introduce the \emph{rounding step}, which is performed after each weight augmentation step. For every tree $T$ associated with a request, the algorithm executes the rounding step on $T$. Specifically, a random threshold is sampled for that tree, and this threshold determines which of its edges are rounded (i.e., included in the solution). The procedure is described in Algorithm~\ref{alg:rounding}.

\begin{algorithm}[!htbp]
\caption{\textbf{Rounding Step} on a Tree $T \in \mathcal{T}$}
\label{alg:rounding}

\small
\textbf{1. Random thresholds.}
Sample $q = 2\left\lceil \log(n' + 1) \right\rceil$ independent random variables $X(T,j)$ for $1 \le j \le q$, each drawn uniformly from $[0,1]$. Define the threshold of $T$ as $\theta(T) := \min_{1 \le j \le q} X(T,j)$.

\medskip
\textbf{2. Edge rounding.}
Include in the solution every edge $e \in T$ whose fractional weight satisfies $w_e > \theta(T)$.
\end{algorithm}

Let $\alpha$ denote the value of the optimal solution, $m'$ the current number of edges in the trees, and $n'$ the current number of requests. Note that the algorithm does not require knowledge of the final number of requests. Instead, it uses the current value of $n'$, and as new requests arrive, more random variables are added to each tree. Consequently, the threshold of each tree can only decrease over time, which means that additional edges may be added to the solution, but once an edge is included, it remains in the solution.

\begin{lemma}
\label{lem:rand}
At any moment during the run of the algorithm, the following hold:
\begin{enumerate}
    \item The expected cost of the algorithm is $O(\alpha \log{n'} \log{m'})$.
    \item For any request $\rho$, the probability that $\rho$ is not served is at most $\frac{1}{{n'}^2}$.
\end{enumerate}
\end{lemma}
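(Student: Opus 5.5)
We follow the rounding analysis of Alon et al.~\cite{Al:04}, adapted to a graph that grows over time, and treat the two parts of the statement separately.

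\textbf{Part 1 (expected cost).} Fix an edge $e$ in a tree $T$ and look at any moment. The threshold is $\theta(T)=\min_{j\le q}X(T,j)$ with $q=2\lceil\log(n'+1)\rceil$. By a union bound over the $q$ uniform variables,
\[
\Pr[w_e>\theta(T)]\le q\,w_e .
\]
Weights only increase and thresholds only decrease, so $e$ lies in the final integral solution exactly when $w_e>\theta(T)$ at the end. Here $w_e$ and $q$ are the final weight and final number of samples. Linearity of expectation then bounds the expected integral cost by $q\sum_e c_e w_e = O(\log n')$ times the fractional cost. By Theorem~\ref{thm:frac} and Lemma~\ref{lem:frac2}, each augmentation step increases the fractional cost by at most $1$, and there are $O(\alpha\log m')$ steps. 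The fractional cost is therefore $O(\alpha\log m')$ plus the preprocessing edges of cost at most $\alpha/m'$ each, which total $O(\alpha)$. This gives $O(\alpha\log n'\log m')$.

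Two bookkeeping points need care here. Initial weights of $1/m$ on new edges are charged as $\sum_e c_e/m\le m\cdot 1/m$ after normalization, which is $O(\alpha)$ in original units. The doubling over $m$ costs only a constant factor, as argued in the paragraph preceding Algorithm~\ref{alg:imp}.

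\textbf{Part 2 (failure probability).} After the augmentation step for $\rho$, the max flow from $S_\rho$ to $T_\rho$ is at least $1$. Request $\rho$ is unserved only if, for every tree $T$ rooted in $T_\rho$ and every candidate node $v\in S_\rho\cap T$, some edge on the path from $v$ to the root of $T$ is not rounded. Within a single tree, with threshold value $\theta$, the rounded edges $\{e: w_e>\theta\}$ reach the root from some $v$ iff the cut $\{e:w_e\le\theta\}$ does not separate $S_\rho\cap T$ from the root. The flow condition restricted to $T$ has the form $f_T$, the max flow inside $T$, with $\sum_T f_T\ge 1$. Because trees are disjoint, the overall flow splits as a sum over the trees in $T_\rho$.

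The key claim is
\[
\Pr[\text{$T$ fails for one sample } X]\le 1-f_T .
\]
Consider the min cut in $T$ with capacities $w_e$. A single uniform $X$ fails to connect only if $X$ is at least the minimum edge weight on every candidate root path. Among the root paths, the path-minima form a nested structure along the tree, so the event "no path has all weights $>X$" is a threshold event on $X$. Its probability is $1-\max$ over cuts, which is bounded by $1-f_T$ via the layered (threshold) cut argument.

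Given the claim, the $q$ samples are independent within a tree and trees are independent, so
\[
\Pr[\rho \text{ unserved}]\le\prod_T(1-f_T)^q\le e^{-q\sum_T f_T}\le e^{-q}\le \frac{1}{{n'}^2}
\]
for the chosen $q$. Since $n'$ only grows and new samples are added, the bound holds at every moment.

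\textbf{Main obstacle.} The hard step is the single-sample claim, namely relating the threshold-rounding connectivity in a tree to its fractional flow. I would try first to prove that in a tree the function $\theta\mapsto$ "the set $\{w_e>\theta\}$ connects some $v$ to the root" fails on a set of measure at most $1-f_T$. The route is to write $f_T\le\int_0^1 \mathbb{1}[\text{connected at level }\theta]\,d\theta$. This holds because each level set that disconnects gives a cut, and integrating cut indicators over $\theta$ yields a fractional cut of value at most $\sum_{e\in C}w_e$ along a min-cut structure. This is exactly the classical tree lemma in \cite{Al:04}.

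The incremental setting changes nothing here, because the argument is applied per request at the current moment to existing edges only.
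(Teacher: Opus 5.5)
Your Part~1 is the paper's argument: a union bound over the $q$ samples, linearity of expectation, and the $O(\alpha\log m')$ fractional bound. The final product computation in Part~2 also matches the paper.

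The gap is the single-sample claim $\Pr[T\text{ fails for one sample }X]\le 1-f_T$, which is false in the generality you set it up. You allow several candidate nodes $S_\rho\cap T$ in one tree and take $f_T$ to be the max flow from that set to the root. A single threshold $X$ connects $T$ exactly when $X<\max_{v\in S_\rho\cap T}\min_{e\in P_v}w_e$. So one threshold only realizes the best single-path bottleneck, whereas max flow adds up over several sources. For instance, let the root have two children $v_1,v_2\in S_\rho$, each edge of weight $1/2$. Then $f_T=1$, but one sample fails with probability $1/2$. With $k$ such children of weight $1/k$, the per-sample failure probability is $1-1/k$, and $q=O(\log n')$ samples cannot push it down to $1/{n'}^2$. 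Hence no ``nested path-minima'' or ``layered cut'' argument can prove the claim as you stated it. Likewise, your proposed bound of $f_T$ by the measure of levels $\theta\in[0,1]$ at which $T$ is connected fails: that measure equals $\min\{1,\max_v\min_{e\in P_v}w_e\}$, which can be far below $f_T$.

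What is missing is the structural fact that makes the claim immediate: a request has at most one node $v$ in each tree. This holds in the multicast problem of \cite{Al:04}, and it is guaranteed by the reduction from MLAP, where each $t\in T_\rho$ yields its own tree $\mathcal{T}'_t$ containing the single new leaf $v_\rho^{(t)}$. The paper's proof uses this implicitly by indexing trees as $T_v$. With one node per tree, the flow inside $T_v$ runs along the unique root path $P_v$, so $f_\rho^{T_v}\le\min_{e\in P_v}w_e$. If $X(T_v,j)<f_\rho^{T_v}$, then every edge of $P_v$ has $w_e>X(T_v,j)$ and is rounded. Hence a single sample fails for $v$ with probability at most $\max\{0,1-f_\rho^{T_v}\}$, which is exactly the step the paper uses. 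With this, your bound $\prod_v(1-f_\rho^{T_v})^q\le e^{-q}\le 1/{n'}^2$ goes through, using independence across distinct trees. The tree rounding in \cite{Al:04} rests on this one-node-per-tree bottleneck observation, not on a multi-source cut lemma.
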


The lemma shows that the randomized IMP algorithm yields a feasible solution with high probability while preserving a good competitive ratio. To ensure feasibility, if a request $\rho$ is not served, the algorithm serves it using the tree with minimum path cost, adding at most $OPT$. Since this occurs with probability at most $\frac{1}{{n'}^2}$, the expected extra cost is negligible and does not affect the asymptotic bound. The full proof appears in Appendix~\ref{appendix:rounding}. Overall, this gives an $O(\log n' \log m')$-competitive randomized algorithm for IMP.


%% file: Sections/6.lower_bound.tex
\section{Lower Bound}
\label{sec:lower_bound}

We establish a lower bound of $4$ on the competitive ratio for any deterministic algorithm, even in the special case where $D = 1$ (i.e., the TCP Acknowledgment Problem) and even when all penalty functions have only the values $0$ or $1$. One can show the same result with continuous penalty function that gets values in $[0,1]$. 
Moseley et al. in ~\cite{Mo:24} and Shmoys et al.~\cite{S:25} designed deterministic algorithms for the single-item JRP (equivalent to the TCP-AP) under the assumption that the penalty functions initially decrease and then increase, that are less than $4$ competitive. Therefore, our lower bound shows that allowing \emph{arbitrary} penalty functions, even with only two possible values, makes the problem strictly harder.

\begin{theorem} \label{theorem:lower_bound}
Every deterministic algorithm for the TCP-AP with arbitrary penalty functions, even when each function has only the values $0$ or $1$, has competitive ratio at least $4-\epsilon$ for every $\epsilon>0$.
\end{theorem}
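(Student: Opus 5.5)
The plan is to reduce from the \emph{online bidding} problem, whose deterministic competitive ratio is known to be exactly $4$, and to realize the bidding game with a single edge of weight $1$ and $0/1$ penalty functions. Recall that in online bidding the algorithm commits to an increasing sequence of bids $x_1<x_2<\dots$. The adversary reveals a hidden threshold $u$ only implicitly, and the algorithm pays $\sum_{i\le k} x_i$, where $x_k$ is the first bid with $x_k\ge u$. Every deterministic strategy has ratio $\sup_u \sum_{i\le k}x_i/u\ge 4-\epsilon$, by the standard Chrobak--Kenyon--Noga--Young argument on the recurrence $x_{k+1}\ge 4x_k - \sum_{i\le k} x_i$ and its characteristic root. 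So it suffices to build an adversary against any deterministic TCP-AP algorithm $A$ that extracts a bidding strategy from $A$. The adversary should make $A$'s cost at least, up to $1+o(1)$, the bidding cost of that strategy, while $\mathrm{OPT}$ is, up to $1+o(1)$, the threshold $u$.

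For the gadget, time is split into phases, and the adversary keeps a pool of pending requests in each phase. Each request $\rho$ has $c_\rho(t)=0$ only on a set $Z_\rho$ of two isolated slots and $c_\rho(t)=1$ otherwise. The first slot is a ``now'' slot of the current phase. The second is a common far-future slot $t^\ast$ shared by all requests ever issued. Since a $0/1$ function with value-$1$ service costs the same as a separate service, any sensible algorithm serves each request at a zero slot. Thus $\mathrm{OPT}\le 1$ plus whatever the adversary forces later, since it can wait and serve everything at $t^\ast$.

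The adversary then injects requests one at a time at a fine time scale and watches when $A$ performs a service. A service by $A$ in phase $k$ plays the role of bid $x_k$, namely the number of requests $A$ let accumulate before paying. The adversary answers each service of $A$ in two ways. First, it immediately issues a fresh request whose zero set avoids the slot $A$ just used. Second, it moves the shared ``escape'' slot beyond $A$'s horizon. This makes the requests $A$ refused to serve in a previous phase penalty-$1$ for $A$, while the offline solution batches them at a single slot. The adversary stops at the moment that maximizes $A$'s cost ratio, which sets the threshold $u$. After scaling so that a service costs $\Theta(1)$ relative to $u$, $A$'s cost is $\sum_{i\le k}x_i + u$-type terms. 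I will normalize these so that the ratio equals the bidding ratio; $\epsilon$ absorbs discretization and the additive service of $\mathrm{OPT}$, by taking the number of requests per unit large.

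I expect the main obstacle to be this encoding step. Penalty functions are fixed on arrival (clairvoyant), so the adversary cannot retroactively punish $A$'s waiting. All adaptivity must come from issuing new requests whose zero sets are chosen after observing $A$'s past service times. I would first try to make each request's zero set consist of a slot in the current phase and one slot in every future phase. The cost to $A$ of ``bidding too low'' then comes from having to serve again when new requests arrive, while $\mathrm{OPT}$ serves all phases at the final phase's slot. I would then check that $A$'s payments per phase match the bidding sum $\sum_{i\le k}x_i$ and $\mathrm{OPT}$ matches $x_k$'s threshold. The remaining steps are routine: invoking the bidding lower bound, and letting the granularity go to zero to obtain $4-\epsilon$.
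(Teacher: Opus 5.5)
Your target is the right one: the paper's arithmetic lemma is exactly the online-bidding bound, stating that for any $c_k\ge 1$ there is an $n$ with $\sum_{k\le n}c_k/(c_{n-1}+1)>4-\epsilon$. But the proposal never builds a TCP-AP instance on which ALG pays the bid sum while OPT pays only the threshold. That encoding is the entire content of the theorem, and you leave it as the ``main obstacle''.

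The one gadget you describe concretely fails. If every request has a zero at a common slot $t^\ast$, ALG simply waits and serves everything at $t^\ast$ for total cost $1$. Relocating $t^\ast$ afterwards is impossible because $c_\rho$ is fixed on arrival, as you yourself note.

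Your identification of the bid is also off. With $0/1$ penalties, as long as pending requests share a future zero slot, letting them pile up costs ALG nothing. What ALG pays is the number of distinct slots it must service, not the number of requests it let accumulate.

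The fallback (one zero slot in every future phase) does not supply the missing piece either. If those slots are shared, waiting is free and none of ALG's payments grows with a `bid', so the bidding accounting you intend to `check' does not arise. If they are not shared, how they are arranged is precisely the part you leave unspecified.

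The missing idea, which the paper supplies, is to make delayed service cost linearly more by fanning out zero slots. Time is cut into epochs, with epoch $i$ having $i$ unit slots. Group $k$ has $(4^k)!$ requests, each with exactly one zero slot per epoch from its release on. In every epoch, each current subgroup is split evenly over all slots of that epoch; the factorial sizes keep the splits integral. Hence serving a pending group in epoch $i$ costs $i$.

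Group $k+1$ is released at epoch $c_k+1$, right after ALG finishes group $k$ in epoch $c_k$. First normalize ALG: it serves only at zero slots, serves all requests that are zero at a serviced slot, and finishes each group within one epoch. Then ALG pays $\sum_{i\le k}c_i$. OPT serves groups $1,\dots,k$ together in epoch $c_{k-1}+1$, where every earlier group still has a zero at each slot, for cost $c_{k-1}+1$.

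So the bid is the epoch index $c_k$ and the threshold is $c_{k-1}+1$. Assuming $(4-\epsilon)$-competitiveness gives $c_k<3c_{k-1}+4$, hence $c_k\le 4^k$. This ensures each group is large enough to cover every slot up to epoch $c_k$, and the bidding lemma then yields the contradiction. Without some such device that makes late service expensive, there is no bidding game to reduce from.
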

 
The proof relies on the following arithmetic lemma.
\begin{lemma} \label{Lemma:lower_bound}
For every $\epsilon > 0$ and for every sequence of real numbers $\{c_k\}_{k=1}^\infty$ (for consistency we define $c_0=0$) with $c_k \ge 1$ for all $k \ge 1$, there exists an $n\in \mathbb{N}$ such that
\[
\frac{s_n}{\,c_{n-1} + 1\,} > 4-\epsilon,
\]
where $s_n := \sum_{k=1}^n c_k$ denote the partial sums of the sequence.
\end{lemma}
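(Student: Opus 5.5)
The plan is a proof by contradiction that turns the assumption into a linear recurrence with no positive solution. Fix $\epsilon>0$ with $\epsilon<4$, set $r:=4-\epsilon$, and suppose that $s_n\le r(c_{n-1}+1)$ for every $n\ge 1$. For $n\ge 2$ we have $c_{n-1}=s_{n-1}-s_{n-2}$, so the assumption becomes
\[
s_n\le r\,(s_{n-1}-s_{n-2})+r .
\]
The recurrence $s_n=r(s_{n-1}-s_{n-2})+r$ has the constant fixed point $r$. I would therefore shift and set $u_n:=s_n-r$, which gives the homogeneous inequality
\[
u_n\le r\,(u_{n-1}-u_{n-2}).
\]
Since $c_k\ge 1$, we have $s_n\ge n$. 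Hence there is an index $N$ with $u_n>0$ for all $n\ge N$; this positivity is the only place the hypothesis $c_k\ge1$ enters.

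Next I use the ratios $q_n:=u_n/u_{n-1}$ for $n\ge N+1$, which are positive. Take $n\ge N+2$. Dividing the inequality by $u_{n-1}>0$ gives
\[
q_n\le r\Bigl(1-\tfrac{1}{q_{n-1}}\Bigr)=:f(q_{n-1}).
\]
Because $q_n>0$, this forces $q_{n-1}>1$. It also gives $q_n<r$ for every such $n$.

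The key estimate compares $f(q)$ with $q$ for $q>0$:
\[
q-f(q)=\frac{q^2-rq+r}{q}\ge \frac{r-r^2/4}{q}=\frac{\delta}{q},
\]
where $\delta:=r(4-r)/4>0$ because $r<4$. This is exactly where the constant $4$ comes from: the discriminant $r^2-4r$ of $x^2-rx+r$ is negative, so the characteristic roots are complex and every exact solution oscillates in sign. Since all ratios are at most $r$, each step lowers the ratio by at least $\delta/r$:
\[
q_n\le q_{n-1}-\delta/r .
\]
After at most about $r^2/\delta$ steps some ratio $q_{n-1}$ falls to $1$ or below. Then $q_n\le f(q_{n-1})\le 0$, contradicting $u_n>0$.

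The main obstacle is the additive $+1$ in the denominator, which prevents a direct ratio argument on $s_n$; the shift by the fixed point $r$ removes it cleanly. The remaining care is to confirm the indices are consistent, namely that $n\ge N+2$ is needed so that $u_{n-2}$, $u_{n-1}$ and $u_n$ are all positive.
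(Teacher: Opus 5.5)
Your proof is correct. It takes a somewhat different route from the paper's, although both rest on the same fact: $x^2-rx+r>0$ for all real $x$ precisely when $r<4$.

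The paper works directly with $y_n=s_{n+1}/s_n$, so the additive $+1$ survives as an error term $(4-\epsilon)/s_n\le(4-\epsilon)/n$. It needs $n>4/\epsilon$ before the perturbed quadratic has negative discriminant. It then argues that $y_n$ is eventually strictly decreasing and bounded below by $1$, passes to the limit $y$, and reaches $0\le -y^2+ry-r$.

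You make two changes. First, you shift by the fixed point $r$, which makes $u_n\le r(u_{n-1}-u_{n-2})$ exactly homogeneous. As a result, $q_n\le r(1-1/q_{n-1})$ holds with no error term, and $c_k\ge1$ is needed only to make $u_n>0$ for $n\ge 4$. Second, your uniform gap $q-f(q)\ge\delta/q$, with $\delta/r=(4-r)/4=\epsilon/4$, replaces the monotone-convergence step by a finite descent.

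The payoff is a fully finitary argument with an explicit witness $n=O(1/\epsilon)$, so the adversary in the lower-bound proof needs only $O(1/\epsilon)$ groups. The paper's version avoids the change of variables, at the cost of a non-quantitative limit argument.

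One small tidy-up: ``all ratios are at most $r$'' is only established for $q_n$ with $n\ge N+2$, so start the descent at $q_{N+2}$.
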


\begin{proof}
Let $\epsilon > 0$ and let $\{c_k\}_{k=1}^\infty$ be a sequence of real numbers with $c_k \ge 1$ for all $k \ge 1$. Since the left-hand side of the desired inequality is always positive, we can assume that $\epsilon < 4$.
Assume toward a contradiction that 
\[
\frac{s_n}{c_{n-1}+1} \le 4-\epsilon, \qquad \text{for every } n.
\]
By denote $y_n := \frac{s_{n+1}}{s_n}$, we can write,
\[
y_{n+1}
    = \frac{s_{n+2}}{s_{n+1}}
    \le \frac{(4-\epsilon) \,(c_{n+1}+1)}{s_{n+1}}
    = \frac{(4-\epsilon) \,(s_{n+1} - s_n + 1)}{s_{n+1}}
    = (4-\epsilon)\left(1 - \frac{1}{y_n} + \frac{1}{s_{n+1}}\right).
\]
Therefore,
\begin{align*}
(y_{n+1} - y_n)\,y_n
    =& -y_n^2 + y_n y_{n+1} \\
    \le& -y_n^2
        + (4-\epsilon) y_n \left(1 - \frac{1}{y_n} + \frac{1}{s_{n+1}}\right) \\
    =& -y_n^2 + (4-\epsilon) y_n - (4-\epsilon) + \frac{4-\epsilon}{s_{n}}.
\end{align*}
Since $s_{n} \ge n$ (as $c_n\ge1$), we further obtain that
\begin{equation} \label{ineq}
(y_{n+1} - y_n)\,y_n
    \le -y_n^2 + (4-\epsilon) y_n + (4-\epsilon)\left(\frac{1}{n} - 1\right).
\end{equation}
Consider the quadratic function $f(x) = -x^2 + (4-\epsilon) x + (4-\epsilon)\left(\frac{1}{n} - 1\right)$. Its discriminant is
\[
\Delta
    = (4-\epsilon)^2 + 4(4-\epsilon)\left(\frac{1}{n} - 1\right)
    = (4-\epsilon)^2 - 4(4-\epsilon) + \frac{4(4-\epsilon)}{n}.
\]
Note that for all sufficiently large $n$, (e.g. $n > \frac{4}{\epsilon}$), we have $\Delta < 0$.
Since the coefficient of $x^2$ is negative, $f(x) < 0$ for all $x$. 
Hence, for all such values of $n$, $(y_{n+1} - y_n)\,y_n \le f(y_n) < 0$. 
Since $y_n$ is always positive we get that $y_{n+1} < y_n$.

In addition, since every element in $c_n$ is positive, $s_{n+1} \ge s_n$, and therefore $y_n \ge 1$ for every $n$.
Thus, $y_n$ is eventually decreasing and bounded below by $1$, so it converges to some limit $y \ge 1$.  
By taking limits in both sides of inequality \ref{ineq} we get that $0 \le -y^2 + (4-\epsilon) y - (4-\epsilon)$.
However, the expression in the right-hand side is strictly negative for all $y$ if $\epsilon < 4$. This contradiction completes the proof.
\end{proof}

We now prove Theorem~\ref{theorem:lower_bound}.
\begin{figure}[t]
    \centering
    \includegraphics[width=0.75\linewidth]{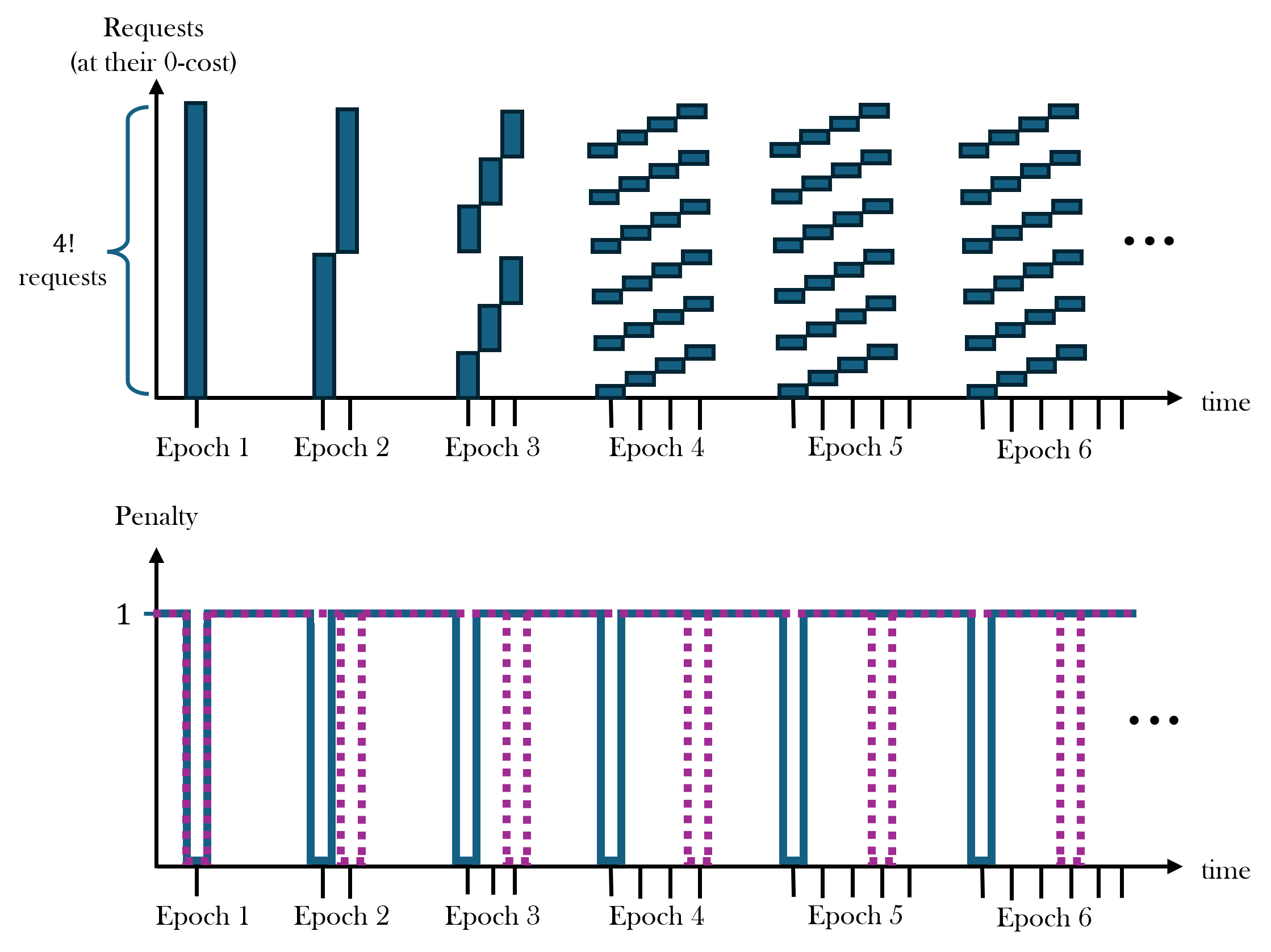}
    \caption{This image illustrates the release of the first group. 
    (a) The first figure shows the assignment of zero-penalty time units to requests in the first group across successive epochs. Each vertical column represents a group of requests that share the same zero-penalty time unit (all other times incur a penalty of~1). In each new epoch, every subgroup is evenly split across all time units of that epoch, assigning each part a distinct zero-penalty time.
    (b) The second figure shows the penalty functions of two representative requests from the first group. The solid blue line represents the penalty function of the first request in the first group, while the dashed purple line represents the penalty function of the last request of the first group (request~24).}
    \label{fig:Releasing the first group}
\end{figure}

\begin{figure}[t]
    \centering
    \includegraphics[height=0.4\textheight]{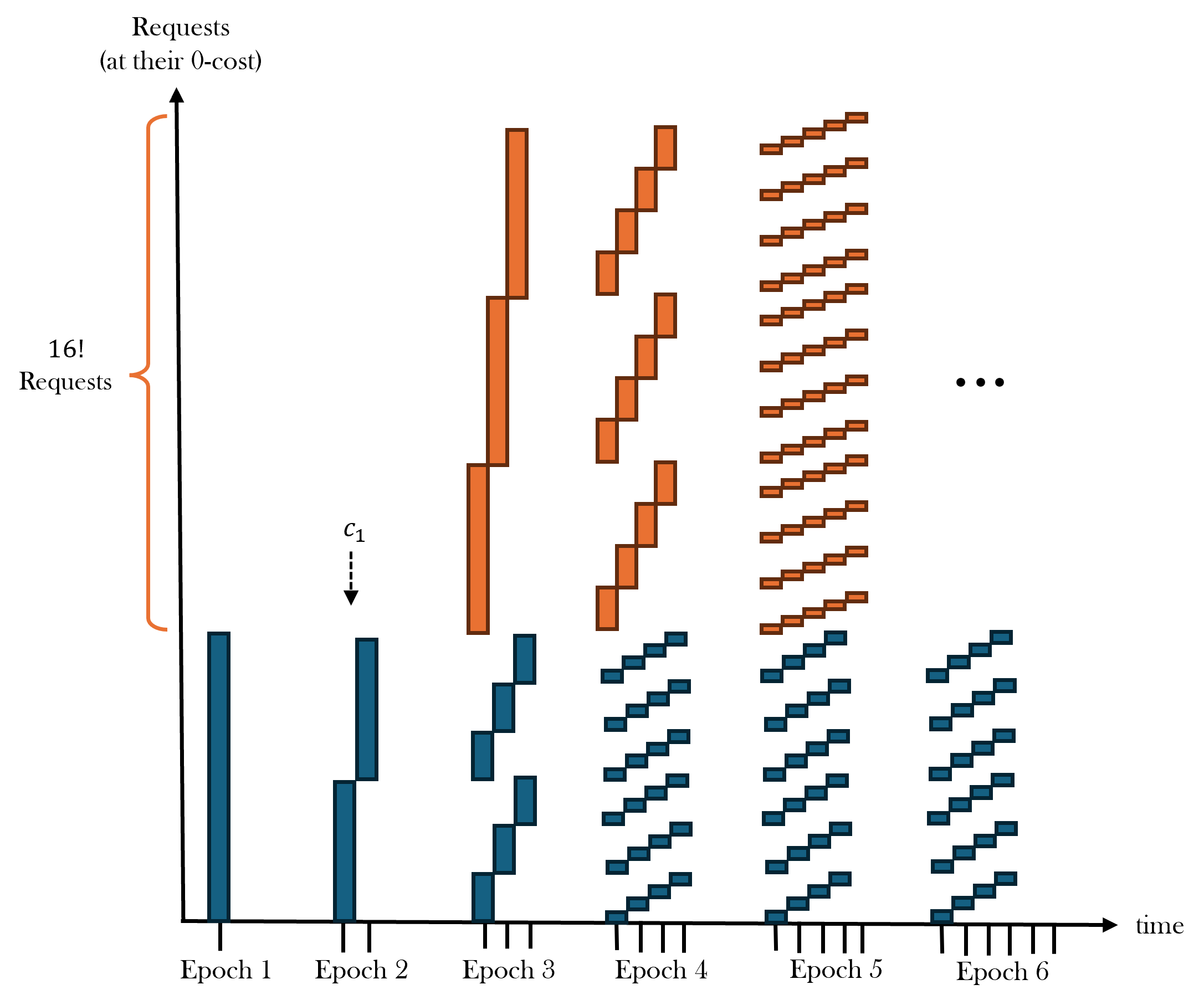}
    \caption{Once the algorithm selects the epoch in which the first group is being served ($c_1$), the second group is released with $(4^2)!=16!$ requests The requests zero-penalty time units are distributed across the next epochs using the same splitting procedure.}
    \label{fig:lower-bound-img}
\end{figure}

\begin{proof}[Proof of Theorem~\ref{theorem:lower_bound}]
Let ALG be an online algorithm for the problem and let $\epsilon>0$.  
The service cost at any time is equal to~$1$, i.e., we are in the setting of the TCP-AP.
Assume, toward a contradiction, that ALG is $(4-\epsilon)$-competitive.
We partition the timeline into disjoint consecutive epochs of increasing length:
epoch~$1$ consists of one time unit; epoch~$2$ consists of two time units; epoch~$3$ consists of three time units; and in general, epoch~$i$ consists of $i$ consecutive time units. All epochs are disjoint.

Requests are released in groups. The $k$-th group consists of $(a_k)!$ requests, where $a_k = 4^k$. Initially, group~$1$ is released, containing $(a_1)! = 4! = 24$ requests. All requests of this group have zero penalty at the single time unit of epoch~$1$. In epoch~$2$, half of the group (12 requests) has zero penalty at the first time unit, and the other half at the second time unit. In epoch~$3$, each half is divided into three equal subgroups, each containing four requests, with zero penalty at a distinct time unit of that epoch. From epoch~$4$ onward, each subgroup from the previous epoch is further subdivided so that each resulting subgroup has a distinct zero-penalty service time. Figure~\ref{fig:Releasing the first group} illustrates this construction.

Once ALG serves the last request of group~$k$, say in epoch~$c_k$, group~$k+1$ is released starting from epoch~$c_{k}+1$. The zero-penalty times for the new group are defined analogously, beginning at epoch~$c_{k}+1$. Specifically, upon its release, group~$k\!+\!1$ is first partitioned into $c_k+1$ subgroups, each assigned a distinct zero-penalty time unit within epoch~$c_k+1$; in each subsequent epoch~$c_k+\ell$, every existing subgroup from previous epoch $c_k+\ell-1$ is further partitioned into $c_k+\ell$ subgroups, with each assigned a different zero-penalty time unit of that epoch. Figure~\ref{fig:lower-bound-img} illustrates the overall release pattern. For convenience, we define $c_0 = 0$.

Note that for every $k \ge 1$, having $(a_k)!$ requests in the $k$-th group guarantees that all subgroups contain an integer number of requests throughout the construction, across all epochs from epoch~$c_{k-1}+1$, when the group is released, until the final epoch in which the group is fully partitioned.

We now make several simplifying assumptions about ALG, all of which can be enforced without increasing its cost:
\begin{enumerate}
    \item \textbf{Every request is served at a zero-penalty time.}  
    If ALG serves a request at a time with penalty~$1$, shifting the service to the nearest zero-penalty time does not increase the cost.

    \item \textbf{Whenever ALG performs a service, it serves all requests whose penalty is zero at that time.}  
    Serving such requests later cannot reduce the cost.

    \item \textbf{Each group is fully served within a single epoch.}  
    By the previous observation, all subgroups that share a zero-penalty time are served together.
    Suppose ALG serves some subgroups of a group in one epoch and the remaining subgroups in a later epoch.
    By construction, when serving the later subgroups, ALG can simultaneously serve all  earlier subgroups as well.
    Hence, serving the entire group in the later epoch can yield only a smaller total cost.
\end{enumerate}
Let $ALG(k)$ and $OPT(k)$ denote the total cost incurred by ALG and OPT, respectively, for serving all requests up to and including group~$k$.
To bound the values of ALG and OPT, consider the following lemma:
\begin{lemma}
For every $k \ge 1$, it holds that $c_k \le a_k$.
\end{lemma}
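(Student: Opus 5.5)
The plan is an induction on $k$. The lemma should follow from two things: the way the zero-penalty times of each group are laid out, and the normalizations (1)--(3) we imposed on ALG. The point is that group~$k$ only \emph{has} zero-penalty times in a bounded range of epochs, and ALG serves the whole group at zero-penalty times within one epoch. So the epoch $c_k$ is forced to lie in that range.

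First, I will make the construction precise. Group~$k$ is released at epoch $c_{k-1}+1$ and contains $(a_k)!$ requests. In each epoch $e=c_{k-1}+1,c_{k-1}+2,\dots$, every current subgroup is split into $e$ equal parts, one per time unit of epoch $e$. After epoch $e$ the number of subgroups is $\prod_{j=c_{k-1}+1}^{e} j = e!/c_{k-1}!$. This divides $(a_k)!$ exactly when $e\le a_k$, provided $c_{k-1}\le a_k$. I read ``until the final epoch in which the group is fully partitioned'' as saying that the splitting runs through epoch $a_k$ and no later. After that, every request of group~$k$ has penalty~$1$ at all times. Hence every zero-penalty time of a request in group~$k$ lies in epochs $c_{k-1}+1,\dots,a_k$. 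I will state this interpretation explicitly, since the lemma depends on it.

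Next, the induction itself. For $k=1$ we have $c_0=0<a_1$, so the construction for group~$1$ is well defined up to epoch $a_1$. For general $k$, the hypothesis $c_{k-1}\le a_{k-1}=4^{k-1}<4^k=a_k$ ensures that the range $[c_{k-1}+1,a_k]$ is nonempty and that all subgroup sizes are integers. By assumption (1), every request of group~$k$ is served at one of its zero-penalty times, which lies in epochs $\le a_k$. By assumption (3), the group is served entirely within a single epoch $c_k$. So $c_k\le a_k$.

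The delicate point is justifying assumption (1) in the regime after epoch $a_k$. Suppose ALG postponed serving some request past all of its zero-penalty times. It would then pay penalty~$1$ for that request. Moving that service back to a zero-penalty time no later than its actual time does not increase the cost, because the service cost stays~$1$ and the penalty drops from $1$ to $0$. I will also check that moving it back does not violate the arrival-time constraint, using the fact that the group was released at epoch $c_{k-1}+1$. If ALG never serves the group at all, its cost is unbounded, which contradicts $(4-\epsilon)$-competitiveness. I expect this justification, together with pinning down the reading of the construction, to be the main obstacle; the rest is bookkeeping.
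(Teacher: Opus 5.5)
\textbf{There is a genuine gap: your premise about the construction contradicts the paper.} Your argument rests on reading the construction as follows: group~$k$ is split only through epoch $a_k$, and afterwards every request of group~$k$ has penalty~$1$ forever. Under that reading $c_k\le a_k$ becomes a purely structural fact, and the $(4-\epsilon)$-competitiveness hypothesis is used only to rule out never serving.

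That is not the paper's construction, and it cannot be. When bounding OPT, the paper states that all requests of previous groups \emph{continue} to have zero penalty at the first time units of every subsequent epoch. This is exactly what makes $OPT(k)\le c_{k-1}+1$ true, because OPT serves all of groups $1,\dots,k$ inside epoch $c_{k-1}+1$. That epoch can lie well beyond $a_j$ for $j\le k-2$. For instance, $c_1=3$ and $c_2=12$ are allowed by $c_k<3c_{k-1}+4$, and then group~$3$ is released in epoch $13>a_1=4$. Under your reading, group~$1$ would have no zero-penalty time in epoch~$13$, and the OPT bound used right after the lemma would fail. So in the actual construction, requests of group~$k$ still have zero-penalty times after epoch $a_k$. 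Nothing structural stops ALG from waiting past $a_k$, and normalizations (1) and (3) alone cannot give $c_k\le a_k$.

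\textbf{The missing idea is to use the competitiveness hypothesis.} The lemma follows from the standing contradiction hypothesis that ALG is $(4-\epsilon)$-competitive, which is why it sits inside that proof. The paper argues that ALG pays $c_{k-1}$ for group $k-1$ and $c_k$ for group $k$, while OPT pays at most $c_{k-1}+1$. Hence $c_{k-1}+c_k\le ALG(k)\le (4-\epsilon)\,OPT(k)<4(c_{k-1}+1)$, i.e.\ $c_k<3c_{k-1}+4$. With $c_1\le(4-\epsilon)\,OPT(1)<4$ and the inductive hypothesis $c_{k-1}\le 4^{k-1}$, this gives $c_k<3\cdot 4^{k-1}+4\le 4^k=a_k$ for $k\ge 2$.

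Note that the logical direction is reversed in your proposal. The paper uses the lemma to \emph{conclude} that the splitting is still integral through epoch $c_k$, so that ALG pays exactly $c_k$; your divisibility computation for $e\le a_k$ is what is needed at that point. You instead assumed a cutoff at $a_k$ in order to derive the lemma.

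When writing this up, make sure the step $c_k\le ALG(k)-c_{k-1}$ is not circular when you assume $c_k>a_k$ for contradiction. What you need from the construction is that serving group~$k$ in any epoch $e$ takes at least $\min(e,a_k)$ services. Then $c_{k-1}+a_k<4(c_{k-1}+1)$, together with $c_{k-1}\le 4^{k-1}$, already contradicts $a_k=4^k$.
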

\begin{proof}
Since ALG is assumed to be $(4-\epsilon)$-competitive, for all $k \ge 2$,
\[
c_{k-1} + c_k \le ALG(k) \le (4-\epsilon)\,OPT(k) < 4(c_{k-1}+1).
\]
Rearranging yields
\[
c_k < 3c_{k-1} + 4 .
\]
Since $OPT(1)=1$, we obtain $c_1 < 4$. By induction, it follows that $c_k \le 4^k = a_k$ for all $k \ge 1$.
\end{proof}
Following this lemma, when ALG serves group~$k$ in epoch~$c_k$, it uses all time units of that epoch, and therefore incurs an additional cost of exactly~$c_k$. Hence,
\[
ALG(k) = \sum_{i=1}^k c_i .
\]
By construction, all requests from previous groups continue to have zero-penalty at the first time units of every subsequent epoch. Therefore, OPT can serve all requests up to group~$k$ in the first epoch in which group~$k$ is released, namely epoch~$c_{k-1}+1$. Thus,
\[
OPT(k) \le c_{k-1} + 1 .
\]
For each $k\ge 1 $, we consider the competitive ratio after serving all requests up to group~$k$:
\[
\frac{ALG}{OPT}(k) \ge \frac{\sum_{i=1}^k c_i}{c_{k-1}+1}.
\]
Since $c_k \ge 1$ for all $k \ge 1$, the sequence $\{c_k\}_{k \ge 1}$ satisfies the conditions of Lemma~\ref{Lemma:lower_bound}. Therefore, there exists $n$ such that
\[
\frac{ALG}{OPT}(n) \ge 4 - \epsilon.
\]
Hence, keeping the above construction, there exists group $n$ in which ALG is more than $(4-\epsilon)$-competitive. This is a contradiction.
\end{proof}

%% file: Sections/7.concluding_remarks.tex
\section{Conclusions and Open Problems}
\label{sec:conculding-remarks}
In this work, we initiate the study of the online Multi-Level Aggregation Problem (MLAP) with \emph{arbitrary penalty functions}, which may increase and decrease multiple times. This setting goes beyond the standard assumption of monotone non-decreasing (delay) functions. To the best of our knowledge, this is the first work to analyze such general penalty functions in the context of MLAP. A central contribution is a discretization of continuous-time penalty functions, which enables a reduction to a new generalization of the Multicast Problem that we introduce, the \emph{Incremental Multicast Problem} (IMP).

Several open questions remain. In particular, there is a significant gap between the competitive ratio of our algorithm and the lower bound we establish. Closing this gap, either by designing improved algorithms or by strengthening the lower bounds, remains an interesting direction, even for the special case $D=1$, corresponding to the TCP Acknowledgment Problem. More broadly, our discretization-and-reduction framework suggests that similar techniques may be applicable to other online problems with arbitrary penalty functions, potentially leading to a general and clean methodology for handling such settings.

%% file: Sections/8.apendix.tex
\newpage
\appendix

\section{Randomized algorithm for the IMP: Section \ref{sec:algorithm}}
\label{appendix:algorithm}
\subsection{Preprocessing step} 
\label{appendix:preprocess}
The original algorithm assumes that the cost of the optimal fractional solution, denoted by $\alpha$, is known in advance. As in the case of assuming knowledge of the number of edges $m$, we can apply a doubling technique and assume that $\alpha$ is known within a factor of $2$. After the first request arrives, we initialize $\alpha$ to be the smallest cost among all new edges introduced by that request. Whenever this estimate is violated, we double $\alpha$. This increases the overall cost of the solution by at most a constant factor of $2$.

After any violation of the estimates of either $\alpha$ or $m$, we discard the current weights assigned to the edges, update our estimate accordingly, and resume the algorithm. Note that the online setting does not allow edge weights to decrease. Thus, when we say we "forget" past weights, we mean that in each iteration, the algorithm only uses the weights assigned during that round. However, at any time, the actual weight of an edge is the maximum weight it has ever been assigned in any previous iteration.

Given that the number of edges $m$ and the optimal value $\alpha$ are known, we may assume that all edge costs lie between $1$ and $m$. To see it, we use the following observation:
\begin{observation}
\label{obs:weights}
In any fractional optimal solution, no edge with cost larger than $\alpha$ receives positive weight.
\end{observation}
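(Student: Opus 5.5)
The plan is a proof by contradiction via a rescaling argument. Suppose $\{w_e\}_{e\in E}$ is an optimal fractional solution of cost $\alpha=\sum_{e} c_e w_e$. Let $E_{>}=\{e : c_e>\alpha\}$, and assume for contradiction that some $e\in E_{>}$ has $w_e>0$. I will build another feasible fractional solution of strictly smaller cost.

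\textbf{Step 1: the expensive edges carry little total weight.} Set
\[
B=\sum_{e\in E_{>}} c_e w_e \qquad\text{and}\qquad \omega=\sum_{e\in E_{>}} w_e .
\]
Since every $e\in E_{>}$ has $c_e>\alpha$ and at least one has $w_e>0$, we get $B>\alpha\omega$. Also $B\le\alpha$, because $B$ is part of the total cost. Hence $\omega<B/\alpha\le 1$, so $0<\omega<1$.

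\textbf{Step 2: delete the expensive edges and rescale.} Define $w'_e=0$ for $e\in E_{>}$ and
\[
w'_e=\min\Bigl\{1,\ \frac{w_e}{1-\omega}\Bigr\}\qquad\text{for } e\notin E_{>} .
\]
Fix any demand $(S_\rho,T_\rho)$. Under $w$ the maximum flow from $S_\rho$ to $T_\rho$ is at least $1$.

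By max-flow/min-cut, deleting a set of edges lowers the maximum flow by at most their total capacity. So after deleting $E_{>}$ the flow is still at least $1-\omega$. Multiplying all capacities by $1/(1-\omega)$ brings the flow back to at least $1$.

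The cap at $1$ does not destroy this. For a single demand in a forest of rooted trees, a flow decomposes into root paths of total value $1$. Any edge therefore needs to carry at most $1$ unit of that demand's flow, so capping capacities at $1$ keeps a flow of value $1$. Weights only increase on the surviving edges, so the monotonicity requirement of the online model is not at issue here; we are only comparing offline fractional solutions.

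\textbf{Step 3: compare costs.} The new cost satisfies
\[
\sum_e c_e w'_e\;\le\;\frac{\alpha-B}{1-\omega}\;<\;\frac{\alpha-\alpha\omega}{1-\omega}\;=\;\alpha ,
\]
where the strict inequality uses $B>\alpha\omega$. This contradicts the optimality of $w$. Hence every edge with $c_e>\alpha$ has zero weight in any optimal fractional solution.

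The only step needing care is the feasibility check in Step 2, namely that removing capacity $\omega$ costs at most $\omega$ units of flow for every demand simultaneously. I handle it by applying the min-cut bound separately to each demand, which works because the rescaling factor $1/(1-\omega)$ is uniform across all demands.
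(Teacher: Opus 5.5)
Your proof is correct and takes the same route as the paper: zero out the expensive weight, rescale the remaining weights by $1/(1-\omega)$, and obtain cost $(\alpha-B)/(1-\omega)<\alpha$, which contradicts optimality. The paper does this for a single edge $e$ and simply asserts that the result is feasible; zeroing all expensive edges at once, the min-cut argument that each demand's flow drops by at most $\omega$, and the cap at $1$ are valid refinements that fill in details the paper leaves implicit.
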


\begin{proof} 
Suppose otherwise, let edge $e$ have $w(e) > 0$ while $c(e) > \alpha$. Because the total cost is at most $\alpha$, it must hold that $w(e) < 1$. Now, define a modified fractional solution $w'$ by setting $w'(e) = 0$ and scaling all other weights as  
\[
w'(e') = \frac{w(e')}{1 - w(e)}, \quad e' \neq e.
\]  
This yields a feasible solution of cost
\[
\frac{\alpha - c(e)w(e)}{1 - w(e)} < \frac{\alpha (1 - w(e))}{1 - w(e)} = \alpha,
\]
contradicting the optimality of OPT.
\end{proof}

Therefore, all edges with cost greater than $\alpha$ can be safely ignored by removing them and all nodes connected below them in the tree. Conversely, all edges with cost less than $\alpha/m$ may be included automatically in the solution, increasing the solution cost by at most an additive $\alpha$. Thus, after this preprocessing step, the remaining edges have costs in the range $[\alpha/m, \alpha]$. Finally, by multiplying with $m/\alpha$, we normalize these costs to lie between $1$ and $m$.

\subsection{Fractional solution: Full proofs from Subsection \ref{subsec:fractional}}
\label{appendix:fractional}

\begin{proof} [proof of Lemma \ref{lem:frac2}]
We define the following potential function:
\[
\Phi = \sum_{e \in E} c_e w_e^* \log(w_e),
\]
where $E$ is the set of edges in the graph, and $w_e^*$ is the weight assigned to edge $e$ in the optimal (fractional) solution. We show 3 properties of $\Phi$:
\begin{enumerate}
    \item \textbf{Initial value:} Initially, $w_e = \frac{1}{m}$ for every $e$, therefore:
    \[
    \Phi_0 = \sum_{e \in E} c_e w_e^* \log\left(\frac{1}{m}\right) \ge -\alpha \log m.
    \]

    \item \textbf{Upper bound:} During the execution of the algorithm, no weight $w_e$ exceeds $1 + 1/c_e$, since otherwise it would not be a part of an augmentation step. Since $c_e \ge 1$, for every edge, we have $w_e \le 2$. Thus,
    \[
    \Phi \le \sum_{e \in E} c_e w_e^* \log(2) \le \alpha
    \]
    \item \textbf{Augmentation increases $\Phi$ by at least 1:} Let $\mathcal{C}$ be a minimal cut where augmentation occurs. Then,
   \begin{align*}
    \Delta\Phi 
    &= \sum_{e\in\mathcal{C}}{c_e w_e^* \log\left(w_e\left(1+\frac{1}{c_e}\right)\right)}
    -
    \sum_{e\in\mathcal{C}}{c_e w_e^* \log(w_e)} \\
    & =
    \sum_{e\in\mathcal{C}}{c_e w_e^* \log\left(1+\frac{1}{c_e}\right)} \\
    &\ge 
    \sum_{e\in\mathcal{C}}{w_e^*} 
    \ge  1,
    \end{align*}
    where the last two inequalities holds since (1) $x\log(1+\frac{1}{x})$ is increasing function where $x\ge 1$, and (2) the optimal solution satisfy the request so every cut has total weight of at least 1.
\end{enumerate}
From properties (1) and (2), the value of $\Phi$ increases by at most $\alpha+\alpha\log m$ during the run of the algorithm. From property (3), each augmentation increases $\Phi$ by at least 1. Therefore, the total number of augmentation steps during the run of the algorithm is at most $\alpha+\alpha\log m = O(\alpha \log m)$.
\end{proof}

\begin{proof} [proof of Theorem \ref{thm:frac}]
We show that during the execution of the algorithm, the total cost incurred by the algorithm satisfies:
\[
ALG \le \alpha \log m + 2\alpha = O(\alpha\log m)
\]
We analyze the change in the algorithm's cost throughout the different steps performed during its execution. Specifically, we consider the impact of each of the following operations:
\begin{itemize}
    \item \textbf{Initial Condition:} At the beginning, every edge has weight $1/m$. Therefore, $ALG \le 1 \le \alpha$. 
    \item \textbf{Augmentation Step:} Let $\mathcal{C}$ be the set of edges involved in a weight augmentation step (a minimal cut). These edges have a total weight strictly less than 1. In the augmentation, each edge $e$ in $\mathcal{C}$ is increased by $\frac{w_e}{c_e}$, incurring a cost of $c_e \cdot \frac{w_e}{c_e} = w_e$. Hence, the increase in cost during this step is:
    \[
    \sum_{e \in \mathcal{C}} w_e < 1.
    \]
    From the previous lemma, we know that the number of augmentation steps is at most $\alpha \log m + \alpha$. Therefore, the total increase in $ALG$ due to augmentations is at most $\alpha \log m + \alpha$.
\end{itemize}
Overall, the value of ALG is at most $\alpha \log m + 2\alpha = O(\alpha\log m)$.
\end{proof}

\subsection{Rounding step: Full proof from Subsection \ref{subsec:rounding}}
\label{appendix:rounding}
\begin{proof} [proof of Lemma \ref{lem:rand}]
For the first part, for an edge $e$, define $Y(e,j) := [{w}_e > X(T,j)]$. Since each random variable is chosen uniformly in $[0,1]$, the probability and the expectation of the indicator $Y(e,j)$ is at most $w_e$. Therefore,
\begin{align*}
ALG &= \mathbb{E} \left[ \sum_{e\in E}{c_e} \right] \le \sum_{e\in E}{\sum_{j=1}^q}{c_e \cdot  \mathbb{E} \left[ Y(e,j) \right]}  \\
& \le \sum_{e\in E}{\sum_{j=1}^q}{c_e  w_e} = q\sum_{e\in E}{w_e c_e} \\
& \le q \cdot O(\alpha\log m)=O(\alpha \log {n'} \log m),
\end{align*}
where the last inequality uses the result of the fractional solution.
For the second part, fix a request $\rho$ and let $T_\rho$ be the roots of the trees that associated with $S_\rho$ (the nodes of $\rho$). From the feasibility of the fractional solution, we know that the total flow from $S_\rho$ to $ T_\rho$ is at least 1. For every $v\in S_\rho$, let $f_\rho^{T_v}$ be the total flow from $v$ to its corresponded root in the tree $T_v$. So, we can write:  $\sum_{v \in S_\rho} {f_\rho^{T_v}} \ge 1$. 
The probability that request $\rho$ is not satisfied using the node $v$ is at most the probability that $f_\rho^{T_v} < \theta(T_v)$. For any $1 \le j \le q$, the probability that the value $f_\rho^{T_v}$ is at most the value of $X(T,j)$ is $1-f_\rho^{T_v}$. Since all random variables are independent, the probability that $\rho$ is not satisfied by any of its associated trees due to the $j$'th random variable is at most
\[
\prod_{v \in S_\rho} (1 - f_\rho^{T_v}) \le e^{- \sum_{v \in S_\rho} {f_\rho^{T_v}}} \le \frac{1}{e}.
\]
Therefore, after $q = 2\lceil\log(n'+1)\rceil$ repetitions of independent random variables for each tree, the probability that $\rho$ is not served is at most
\[
\left(\frac{1}{e}\right)^q \le \frac{1}{n'^2}.
\]
\end{proof}

%% file: Sections/bibliography.bib
@article{Bu:09,
	author = {Buchbinder N. and Naor J.},
	year = "2007",
	title = {The Design of Competitive Online Algorithms
via a Primal–Dual Approach},
	journal = {Foundations and Trends in Theoretical Computer Science},
	volume = "3",
	number = "2–3",
	pages = "93-263",
}

@article{Al:04,
	author       = {Noga Alon and
                  Baruch Awerbuch and
                  Yossi Azar and
                  Niv Buchbinder and
                  Joseph Naor},
  title        = {A general approach to online network optimization problems},
  journal      = {{ACM} Trans. Algorithms},
  volume       = {2},
  number       = {4},
  pages        = {640--660},
  year         = {2006},
  doi          = {10.1145/1198513.1198522},
}

@article{Bi:15,
  author       = {Marcin Bienkowski and
                  Martin B{\"{o}}hm and
                  Jaroslaw Byrka and
                  Marek Chrobak and
                  Christoph D{\"{u}}rr and
                  Luk{\'{a}}s Folwarczn{\'{y}} and
                  Lukasz Jez and
                  Jir{\'{\i}} Sgall and
                  Kim Thang Nguyen and
                  Pavel Vesel{\'{y}}},
  title        = {Online Algorithms for Multilevel Aggregation},
  journal      = {Oper. Res.},
  volume       = {68},
  number       = {1},
  pages        = {214--232},
  year         = {2020}
}

@inproceedings{AT:19,
  author       = {Yossi Azar and
                  Noam Touitou},
  title        = {General Framework for Metric Optimization Problems with Delay or with
                  Deadlines},
  booktitle    = {60th {IEEE} Annual Symposium on Foundations of Computer Science, {FOCS}
                  2019, Baltimore, Maryland, USA, November 9-12, 2019},
  pages        = {60--71},
  publisher    = {{IEEE} Computer Society},
  year         = {2019},
  doi          = {10.1109/FOCS.2019.00013},
}

@inproceedings{Bu:17,
  author       = {Niv Buchbinder and
                  Moran Feldman and
                  Joseph (Seffi) Naor and
                  Ohad Talmon},
  editor       = {Philip N. Klein},
  title        = {\emph{O}(depth)-Competitive Algorithm for Online Multi-level Aggregation},
  booktitle    = {Proceedings of the Twenty-Eighth Annual {ACM-SIAM} Symposium on Discrete
                  Algorithms, {SODA} 2017, Barcelona, Spain, Hotel Porta Fira, January
                  16-19},
  pages        = {1235--1244},
  publisher    = {{SIAM}},
  year         = {2017},
  doi          = {10.1137/1.9781611974782.80},
}

@inproceedings{Mo:24,
  author       = {Benjamin Moseley and
                  Aidin Niaparast and
                  R. Ravi},
  title        = {Putting Off the Catching Up: Online Joint Replenishment Problem with
                  Holding and Backlog Costs},
  booktitle    = {Proceedings of the 2025 Annual {ACM-SIAM} Symposium on Discrete Algorithms,
                  {SODA} 2025, New Orleans, LA, USA, January 12-15, 2025},
  pages        = {3865--3883},
  publisher    = {{SIAM}},
  year         = {2025},
  doi          = {10.1137/1.9781611978322.130}
}

@inproceedings{Bi:21,
  author       = {Marcin Bienkowski and
                  Bj{\"{o}}rn Feldkord and
                  Pawel Schmidt},
  editor       = {Markus Bl{\"{a}}ser and
                  Benjamin Monmege},
  title        = {A Nearly Optimal Deterministic Online Algorithm for Non-Metric Facility Location},
  booktitle    = {38th International Symposium on Theoretical Aspects of Computer Science, {STACS} 2021, March 16-19, 2021, Saarbr{\"{u}}cken, Germany (Virtual
                  Conference)},
  series       = {LIPIcs},
  volume       = {187},
  pages        = {14:1--14:17},
  publisher    = {Schloss Dagstuhl - Leibniz-Zentrum f{\"{u}}r Informatik},
  year         = {2021},
  doi          = {10.4230/LIPICS.STACS.2021.14},
}

@article{Bu:13,
  author       = {Niv Buchbinder and
                  Tracy Kimbrel and
                  Retsef Levi and
                  Konstantin Makarychev and
                  Maxim Sviridenko},
  title        = {Online Make-to-Order Joint Replenishment Model: Primal-Dual Competitive
                  Algorithms},
  journal      = {Oper. Res.},
  volume       = {61},
  number       = {4},
  pages        = {1014--1029},
  year         = {2013},
  doi          = {10.1287/OPRE.2013.1188},
}

@inproceedings{Bi:14,
  author       = {Marcin Bienkowski and
                  Jaroslaw Byrka and
                  Marek Chrobak and
                  Lukasz Jez and
                  Dorian Nogneng and
                  Jir{\'{\i}} Sgall},
  editor       = {Chandra Chekuri},
  title        = {Better Approximation Bounds for the Joint Replenishment Problem},
  booktitle    = {Proceedings of the Twenty-Fifth Annual {ACM-SIAM} Symposium on Discrete
                  Algorithms, {SODA} 2014, Portland, Oregon, USA, January 5-7, 2014},
  pages        = {42--54},
  publisher    = {{SIAM}},
  year         = {2014},
  doi          = {10.1137/1.9781611973402.4},
}

@article{BR:12,
  author       = {Carlos Fisch Brito and
                  Elias Koutsoupias and
                  Shailesh Vaya},
  title        = {Competitive Analysis of Organization Networks or Multicast Acknowledgment:
                  How Much to Wait?},
  journal      = {Algorithmica},
  volume       = {64},
  number       = {4},
  pages        = {584--605},
  year         = {2012},
  doi          = {10.1007/S00453-011-9567-5},
}

@inproceedings{Doo:98,
  author       = {Daniel R. Dooly and
                  Sally A. Goldman and
                  Stephen D. Scott},
  editor       = {Jeffrey Scott Vitter},
  title        = {{TCP} Dynamic Acknowledgment Delay: Theory and Practice (Extended
                  Abstract)},
  booktitle    = {Proceedings of the Thirtieth Annual {ACM} Symposium on the Theory
                  of Computing, Dallas, Texas, USA, May 23-26, 1998},
  pages        = {389--398},
  publisher    = {{ACM}},
  year         = {1998},
  doi          = {10.1145/276698.276792},
}

@article{Doo:01,
  author       = {Daniel R. Dooly and
                  Sally A. Goldman and
                  Stephen D. Scott},
  title        = {On-line analysis of the {TCP} acknowledgment delay problem},
  journal      = {J. {ACM}},
  volume       = {48},
  number       = {2},
  pages        = {243--273},
  year         = {2001},
  doi          = {10.1145/375827.375843},
}

@inproceedings{Se:00,
  author       = {Steven S. Seiden},
  editor       = {F. Frances Yao and
                  Eugene M. Luks},
  title        = {A guessing game and randomized online algorithms},
  booktitle    = {Proceedings of the Thirty-Second Annual {ACM} Symposium on Theory
                  of Computing, May 21-23, 2000, Portland, OR, {USA}},
  pages        = {592--601},
  publisher    = {{ACM}},
  year         = {2000},
  url          = {https://doi.org/10.1145/335305.335385},
  doi          = {10.1145/335305.335385},
}

@inproceedings{Ka:01,
  author       = {Anna R. Karlin and
                  Claire Kenyon and
                  Dana Randall},
  editor       = {Jeffrey Scott Vitter and
                  Paul G. Spirakis and
                  Mihalis Yannakakis},
  title        = {Dynamic {TCP} acknowledgement and other stories about e/(e-1)},
  booktitle    = {Proceedings on 33rd Annual {ACM} Symposium on Theory of Computing,
                  July 6-8, 2001, Heraklion, Crete, Greece},
  pages        = {502--509},
  publisher    = {{ACM}},
  year         = {2001},
  doi          = {10.1145/380752.380845},
}

@article{AS:25,
  author       = {Yossi Azar and
                  Shahar Lewkowicz},
  title        = {Online Joint Replenishment Problem with Arbitrary Holding and Backlog
                  Costs},
  journal      = {CoRR},
  volume       = {abs/2507.16096},
  year         = {2025},
  url          = {https://doi.org/10.48550/arXiv.2507.16096},
  doi          = {10.48550/ARXIV.2507.16096},
}

@article{S:25,
  author       = {David B. Shmoys and 
                  Varun Suriyanarayana and
                 Seeun William Umboh},
  title        = {Improved Online Algorithms for Inventory Management Problems with Holding and Delay Costs: Riding the Wave Makes Things Simpler, Stronger, \& More General},
  journal      = {},
  year         = {2025},
}

@article{Al:09,
  author       = {Noga Alon and
                  Baruch Awerbuch and
                  Yossi Azar and
                  Niv Buchbinder and
                  Joseph Naor},
  title        = {The Online Set Cover Problem},
  journal      = {{SIAM} J. Comput.},
  volume       = {39},
  number       = {2},
  pages        = {361--370},
  year         = {2009},
  doi          = {10.1137/060661946},
}

@inproceedings{Bi:13,
  author       = {Marcin Bienkowski and
                  Jaroslaw Byrka and
                  Marek Chrobak and
                  Lukasz Jez and
                  Jir{\'{\i}} Sgall and
                  Grzegorz Stachowiak},
  editor       = {Frank Dehne and
                  Roberto Solis{-}Oba and
                  J{\"{o}}rg{-}R{\"{u}}diger Sack},
  title        = {Online Control Message Aggregation in Chain Networks},
  booktitle    = {Algorithms and Data Structures - 13th International Symposium, {WADS}
                  2013, London, ON, Canada, August 12-14, 2013. Proceedings},
  series       = {Lecture Notes in Computer Science},
  volume       = {8037},
  pages        = {133-145},
  publisher    = {Springer},
  year         = {2013},
  doi          = {10.1007/978-3-642-40104-6\_12},
}

@article{More:Bi:21,
  author       = {Marcin Bienkowski and
                  Martin B{\"{o}}hm and
                  Jaroslaw Byrka and
                  Marek Chrobak and
                  Christoph D{\"{u}}rr and
                  Luk'av{s} Folwarczn'y and
                  Lukasz Jez and
                  Jir{\'{\i}} Sgall and
                  Kim Thang Nguyen and
                  Pavel Vesel{\'{y}}},
  title        = {New results on multi-level aggregation},
  journal      = {Theor. Comput. Sci.},
  volume       = {861},
  pages        = {133-143},
  year         = {2021},
  doi          = {10.1016/J.TCS.2021.02.016},
}

@inproceedings{More:Bo:20,
  author       = {Thomas Bosman and
                  Neil Olver},
  editor       = {Daniel Bienstock and
                  Giacomo Zambelli},
  title        = {Improved Approximation Algorithms for Inventory Problems},
  booktitle    = {Integer Programming and Combinatorial Optimization - 21st International
                  Conference, {IPCO} 2020, London, UK, June 8-10, 2020, Proceedings},
  series       = {Lecture Notes in Computer Science},
  volume       = {12125},
  pages        = {91-103},
  publisher    = {Springer},
  year         = {2020},
  doi          = {10.1007/978-3-030-45771-6\_8},
}

@article{More:Che:16,
  author       = {Maurice Cheung and
                  Adam N. Elmachtoub and
                  Retsef Levi and
                  David B. Shmoys},
  title        = {The submodular joint replenishment problem},
  journal      = {Math. Program.},
  volume       = {158},
  number       = {1-2},
  pages        = {207--233},
  year         = {2016},
  doi          = {10.1007/S10107-015-0920-3},
}

@inproceedings{More:Mar:24,
  author       = {Mathieu Mari and
                  Michal Pawlowski and
                  Runtian Ren and
                  Piotr Sankowski},
  editor       = {Juli{\'{a}}n Mestre and
                  Anthony Wirth},
  title        = {Online Multi-Level Aggregation with Delays and Stochastic Arrivals},
  booktitle    = {35th International Symposium on Algorithms and Computation, {ISAAC}
                  2024, December 8-11, 2024, Sydney, Australia},
  series       = {LIPIcs},
  volume       = {322},
  pages        = {49:1--49:20},
  publisher    = {Schloss Dagstuhl - Leibniz-Zentrum f{\"{u}}r Informatik},
  year         = {2024},
  doi          = {10.4230/LIPICS.ISAAC.2024.49},
}

@article{More:Mc:21,
  author       = {Jeremy McMahan},
  title        = {A D-competitive algorithm for the Multilevel Aggregation Problem with
                  Deadlines},
  journal      = {CoRR},
  volume       = {abs/2108.04422},
  year         = {2021},
  url          = {https://arxiv.org/abs/2108.04422},
  eprinttype    = {arXiv},
  eprint       = {2108.04422},
}

@article{More:Nag:16,
  author       = {Viswanath Nagarajan and
                  Cong Shi},
  title        = {Approximation algorithms for inventory problems with submodular or
                  routing costs},
  journal      = {Math. Program.},
  volume       = {160},
  number       = {1-2},
  pages        = {225--244},
  year         = {2016},
  url          = {https://doi.org/10.1007/s10107-016-0981-y},
  doi          = {10.1007/S10107-016-0981-Y}
}

@inproceedings{AT:20,
  author       = {Yossi Azar and
                  Noam Touitou},
  editor       = {Sandy Irani},
  title        = {Beyond Tree Embeddings - a Deterministic Framework for Network Design
                  with Deadlines or Delay},
  booktitle    = {61st {IEEE} Annual Symposium on Foundations of Computer Science, {FOCS}
                  2020, Durham, NC, USA, November 16-19, 2020},
  pages        = {1368--1379},
  publisher    = {{IEEE}},
  year         = {2020},
  doi          = {10.1109/FOCS46700.2020.00129},
}
